\documentclass[submission,copyright,creativecommons]{eptcs}
\providecommand{\event}{LSFA 2026} % Name of the event you are submitting to

\usepackage{iftex}
\usepackage{graphicx} % Required for inserting images
\usepackage{enumitem}
\usepackage{amsmath}
\usepackage{amscd}
\usepackage{amsfonts}
\usepackage{amsthm} 
\newcommand{\oc}{\mathord{!}}
\newcommand{\wn}{\mathord{?}}
\newcommand{\with}{\mathbin{\&}}
\usepackage{bussproofs}
\usepackage{amssymb}
\usepackage{mathtools}
\usepackage{latexsym}
\usepackage{color}
\usepackage{multicol}
\usepackage{url}
\newcommand\utimes{\mathbin{\ooalign{$\cup$\cr%
   \hfil\raise0.42ex\hbox{$\scriptscriptstyle\times$}\hfil\cr}}}
\usepackage{stmaryrd}
\newenvironment{bprooftree}
  {\leavevmode\hbox\bgroup}
  {\DisplayProof\egroup}
\makeatletter
\newtheorem{theorem}{Theorem}[section]
\newtheorem{lemma}[theorem]{Lemma}
\newtheorem{proposition}[theorem]{Proposition}
\newtheorem{corollary}[theorem]{Corollary}
\theoremstyle{definition}
\newtheorem{definition}[theorem]{Definition}

\theoremstyle{remark}
\newtheorem{remark}[theorem]{Remark}

\ifpdf
  \usepackage{underscore}         % Only needed if you use pdflatex.
  \usepackage[T1]{fontenc}        % Recommended with pdflatex
\else
  \usepackage{breakurl}           % Not needed if you use pdflatex only.
\fi

\title{Phase Semantic Cut-elimination for\\ Intuitionistic Linear Logic with Fixed Points}
\author{Jun Suzuki
\institute{Graduate School of Humanities and Human Sciences\\ Hokkaido University\\ Sapporo, Hokkaido, Japan}
\email{suzuki.jun.g2@elms.hokudai.ac.jp}
\and
Charles Grellois
\institute{School of Computer Science\\ University of Sheffield\\ Sheffield, United Kingdom}
\email{c.grellois@sheffield.ac.uk}
\and
Katsuhiko Sano
\institute{Faculty of Humanities and Human Sciences\\ Hokkaido University\\ Sapporo, Hokkaido, Japan}
\email{v-sano@let.hokudai.ac.jp}}
\def\titlerunning{Phase Semantic Cut-elimination for Intuitionistic Linear Logic with Fixed Points}
\def\authorrunning{J. Suzuki, C. Grellois \& K. Sano}
\begin{document}
\maketitle

\begin{abstract}
This paper establishes the cut-elimination theorem for intuitionistic propositional multiplicative-additive linear logic with the least and greatest fixpoints ($\mu \mathbf{IMALL}$) by means of its phase semantics. A classical first-order multiplicative-additive linear logic system with the least and greatest fixpoints was introduced by Baelde and Miller (2007). Its intuitionistic fragment was discussed in Baelde (2012), but the cut-elimination theorem for this fragment has not yet been proved. We introduce a propositional fragment of this system, $\mu \mathbf{IMALL}$, and establish the cut-elimination theorem. To prove the theorem, we define phase semantics for $\mu \mathbf{IMALL}$ and show the following two statements: (1) Soundness: if a formula is provable in $\mu \mathbf{IMALL}$, then it is true in all phase models, and (2) Cut-free Completeness: if a formula is true in all phase models, then it is provable in $\mu \mathbf{IMALL}$ without Cut. Okada (1999, 2002) employed a phase semantic method to prove the cut-elimination theorems for classical and intuitionistic linear logic systems. De et al. (2022) applied this method to a propositional fragment of classical propositional multiplicative-additive linear logic with the least and greatest fixpoints. We refine and apply their arguments to prove the cut-elimination theorem for $\mu \mathbf{IMALL}$.
\end{abstract}

\section{Introduction}\label{sec:Introduction}
This paper establishes the cut-elimination theorem for intuitionistic propositional multiplicative-additive linear logic with the least and greatest fixpoints ($\mu \mathbf{IMALL}$) in terms of its phase semantics.

A classical linear logic system with the least and greatest fixpoints was introduced by Baelde and Miller \cite{BaeldeMiller2007}. While standard classical linear logic introduced by Girard \cite{Girard1987} is a propositional sequent system that has exponential modalities $\oc$ and $\wn$ as well as multiplicative and additive connectives, Baelde and Miller's system is a first-order predicate (classical) linear logic with multiplicative-additive connectives, and has the least fixpoint operator $\mu$ and the greatest fixpoint operator $\nu$, instead of exponential modalities $\oc$ and $\wn$. By using these operators and their rules, we can enrich inference with induction and coinduction, and also simulate the exponentials via a translation we recall later in the paper. 

The \emph{intuitionistic} fragment of linear logic extended with the least and greatest fixpoints is discussed in Baelde~\cite[pp.7-8]{Baelde2012}. Since the intuitionistic system has a two-sided sequent calculus, the rules for $\mu$ and $\nu$ can be naturally derived from Knaster-Tarski's characterization of the least and greatest fixpoints in complete lattices (see Baelde \cite[pp.7-8]{Baelde2012}). The cut-elimination theorem for the intuitionistic fragment of linear logic extended with the least and greatest fixpoints, however, has not been proved. We introduce a propositional fragment of it, $\mu \mathbf{IMALL}$, and prove the cut-elimination for it. Cut-elimination for linear logic with the least and greatest fixpoints is essential in Baelde's framework since the system was introduced in the context of linear logic programming \cite{Andreoli1992}, where a search procedure for a cut-free proof is considered as a computation, as opposed to ordinary logic programming, where eliminating $Cut$s is regarded as a computation.

Furthermore, we have our own motivations for the intuitionistic fragment and cut-elimination for it. Intuitionistic proof systems are closely related to type-theoretic interpretations, and often give rise to computational interpretations via extensions of the Curry–Howard correspondence. In this perspective, the study of $\mu \mathbf{IMALL}$ and of its cut-elimination contributes to the development of type systems combining the resource-sensitive features of linear logic with inductive and coinductive definitions. Such systems provide a principled setting to reason about recursive and corecursive constructions, without imposing syntactic restrictions on their mutual nesting.

Beyond the modeling of inductive and coinductive data types, $\mu \mathbf{IMALL}$ also suggests a broader role as a specification logic. Recent work by Bauer and Saurin~\cite{bauer-saurin} shows that the modal $\mu$-calculus can be embedded into variants of classical linear logic with fixpoints. This suggests the possibility of using intuitionistic linear logics with fixpoints as type systems for programs with recursion, in which types express specification properties that programs need to satisfy. Such a framework seems adapted for a proof-theoretic study of higher-order model-checking (HOMC)
\cite{Ong06,KobayashiO09}
directly based on inductives and coinductives, and avoiding the detour via automata theory. This matters, as an important motivation for this line of research lies in its connection with proof assistants supporting inductive and coinductive reasoning, such as Coq, Agda, Lean, or Isabelle. By providing a proof-theoretic account of recursion and corecursion in a linear setting, $\mu \mathbf{IMALL}$ offers a promising framework to bridge theoretical developments in linear logic with program semantics and certified verification. Since HOMC is decidable, this could lead (if the translation has nice properties) to the identification of a decidable fragment of $\mu\mathbf{IMALL}$. This paper can be seen as a first step towards this goal: it establishes a crucial sanity check for $\mu\mathbf{IMALL}$, cut-elimination. One hope of our approach is to ultimately deliver a certified proof of the decidability of higher-order model-checking -- a first step being to provide a new proof, based on mathematical objects and concepts that are naturally close to proof assistants, since the mathematical complexity of existing proofs would lead to a very challenging certification process.

When it comes to syntactic cut-elimination, the rules for $\mu$ and $\nu$ make it difficult to prove the cut-elimination theorem because formulas in premises of the right rule and the left rule of $\mu$ (or $\nu$) may not match, and so a cut-elimination procedure may fail. For example, the rules for $\mu$ are of the following form in a propositional linear logic with the least and greatest fixpoints:

\[
    \begin{bprooftree}
        \AxiomC{$\Gamma \vdash \Delta, A(\mu x.A/x)$}
        \RightLabel{$({\vdash}\mu)$}
        \UnaryInfC{$\Gamma \vdash \Delta, \mu x.A$}
    \end{bprooftree},
    \begin{bprooftree}
        \AxiomC{$A(S/x) \vdash S$}
        \AxiomC{$\Gamma, S \vdash \Delta$}
        \RightLabel{$(\mu{\vdash})$}
        \BinaryInfC{$\mu x.A, \Gamma \vdash \Delta,$}
    \end{bprooftree}
\]
where $A(F/x)$ is a formula obtained by replacing all the occurrences of $x$ in $A$ with $F$. Consider the following proof with $Cut$
\[
    \begin{bprooftree}
        \AxiomC{$\Gamma_1 \vdash \Delta_1, A(\mu x.A/x)$}
        \RightLabel{$({\vdash}\mu)$}
        \UnaryInfC{$\Gamma_1 \vdash \Delta_1, \mu x.A$}
        \AxiomC{$A(S/x) \vdash S$}
        \AxiomC{$S, \Gamma_2 \vdash \Delta_2$}
        \RightLabel{$(\mu{\vdash})$}
        \BinaryInfC{$\mu x.A, \Gamma_2 \vdash \Delta_2$}
        \RightLabel{($Cut$)}
        \BinaryInfC{$\Gamma_1, \Gamma_2 \vdash \Delta_1, \Delta_2$}
    \end{bprooftree}
\]
Since $\mu x.A$ and $S$ may differ, we cannot reduce the $Cut$ to the $Cut$ between $A(\mu x.A/x)$ and $A(S/x)$ by standard arguments.

To solve this problem, three approaches are known. First, Baelde and Miller~\cite[Section 2]{BaeldeMiller2007ex} translate a formula with a fixpoint operator into a formula of second-order linear logic, and reduce the cut-elimination to that of focused second-order linear logic. Second, Baelde \cite[Section 3]{Baelde2012} proves the cut-elimination theorem by defining reduction rules and reducibility candidates. This method was also employed in Tiu \cite[Ch. 4]{Tiu2004} for a first-order intuitionistic system with induction and coinduction. Third, De et al. \cite{DeJafarrahmaniSaurin2022} employ phase semantics and reducibility candidates. In this paper, we adopt the third approach, a phase semantic approach. Phase semantics is a standard truth-value semantics introduced in Girard \cite[Section 1]{Girard1987}. De et al. \cite{DeJafarrahmaniSaurin2022} defined a phase semantics for the propositional fragment of multiplicative-additive linear logic with the least and greatest fixpoints, $\mu \mathbf{MALL}$\footnote{In Baelde-Miller \cite{BaeldeMiller2007} and Baelde \cite{Baelde2012}, $\mu \mathbf{MALL}$ denotes a first-order predicate multiplicative-additive linear logic. In this paper, however, we refer to propositional multiplicative-additive linear logic with the least and greatest fixpoints as $\mu \mathbf{MALL}$, in accordance with De et al. \cite{DeJafarrahmaniSaurin2022}.}, and proved the cut-elimination in terms of the semantics by showing the following two statements: (1) Soundness: if a formula is provable in $\mu \mathbf{IMALL}$, then it is true in all phase semantics, and (2) Cut-free Completeness: if a formula is true in all phase semantics, then it is provable in $\mu \mathbf{IMALL}$ without $Cut$. This method avoids considering a translation into second-order linear logic or reduction rules. Okada~\cite{Okada1999,Okada2002} proved the cut-elimination theorem for variations of linear logic including intuitionistic propositional linear logic ($\mathbf{ILL}$). We combine and refine Okada's and De et al.'s methods to prove the cut-elimination theorem for $\mu \mathbf{IMALL}$.

This paper is structured as follows. Section \ref{sec:Preliminaries} introduces propositional intuitionistic multiplicative-additive linear logic $\mathbf{IMALL}$, and provides an overview of Okada's semantic method \cite{Okada2002} for proving the cut-elimination theorem for $\mathbf{IMALL}$.
Section \ref{sec:SyntaxAndProofSystemForIMALLWithLeastAndGreatestFixedPoints} introduces syntax and sequent calculus system $\mu \mathbf{IMALL}$ for propositional intuitionistic multiplicative-additive linear logic with fixpoints, and establishes the functoriality lemma, a proof-theoretic property which plays an important role in our proof of the cut-free completeness. Section \ref{sec:PhaseSemanticsForMuIMALL} introduces a phase semantics for $\mu \mathbf{IMALL}$ and proves the soundness theorem. Section \ref{sec:CompletenessAndSemanticCutElimination} proves the cut-free completeness and the main theorem, the cut-elimination theorem. Section \ref{sec:ConclusionAndFutureWorks} concludes the paper with future research directions. 
\section{Preliminaries: IMALL and Phase Semantic Cut-elimination}\label{sec:Preliminaries}
In this section, we introduce the syntax and phase semantics for intuitionistic multiplicative-additive linear logic, $\mathbf{IMALL}$, and present Okada's semantic method \cite{Okada2002} for proving the cut-elimination theorem for $\mathbf{IMALL}$.

Syntax $\mathcal{L}$ of $\mathbf{IMALL}$ is defined as follows:
\[
    A \Coloneqq p \mid \mathbf{1} \mid \top \mid \mathbf{0} \mid A \otimes A \mid A \with A \mid A \oplus A \mid A \multimap A,
\]
where $p$ is an arbitrary element of the countably infinite set $\mathsf{Prop}$ of propositional variables. 
Greek letters $\Gamma, \Delta, \ldots$ denote finite \emph{multisets} of formulas. A \emph{sequent} is of the form $\Gamma \vdash C$, where the antecedent is a finite multiset of formulas and the succedent is exactly one formula. In the sequent, we write ``$\Gamma, \Delta$'' for $\Gamma \cup \Delta$.  
In the context of cut-free completeness later, we use ``$[A]$'' to denote the singleton multiset of $A$, while we simply write ``$A$'' within the sequent itself. 
A sequent with an empty antecedent, such as $\emptyset \vdash C$, is written as $\vdash C$. A \emph{sequent calculus} system of $\mathbf{IMALL}$ is shown in Table \ref{table:IMALL}. A \emph{proof} of $\mathbf{IMALL}$ is a finite tree generated by the $(\mathbf{id})$ axiom and the rules of $\mathbf{IMALL}$. We denote $\mathbf{IMALL}$ without $(Cut)$ by $\mathbf{IMALL}^-$.
\begin{table}[htbp]
    \caption{Sequent Calculus of $\mathbf{IMALL}$}
    \centering
    \renewcommand{\arraystretch}{2.3}
    \begin{tabular}{|cc|}
    \hline\begin{bprooftree}
  \AxiomC{}
  \RightLabel{(\textbf{id})}
  \UnaryInfC{$A \vdash A$}
  \end{bprooftree}
&
  \begin{bprooftree}
  \AxiomC{$\Gamma \vdash A$}
  \AxiomC{$A, \Delta \vdash B$}
  \RightLabel{($Cut$)}
  \BinaryInfC{$\Gamma, \Delta \vdash B$}
  \end{bprooftree} \\

  \multicolumn{2}{|c|}{
  \begin{bprooftree}
  \AxiomC{}
  \RightLabel{(${\vdash}\mathbf{1}$)}
  \UnaryInfC{$\vdash \mathbf{1}$}
  \end{bprooftree}
\begin{bprooftree}
  \AxiomC{$\Gamma \vdash C$}
  \RightLabel{($\mathbf{1}{\vdash}$)}
  \UnaryInfC{$\mathbf{1}, \Gamma \vdash C$}
\end{bprooftree}
\begin{bprooftree}
  \AxiomC{}
  \RightLabel{(${\vdash}\top$)}
  \UnaryInfC{$\Gamma \vdash \top$}
\end{bprooftree}
\begin{bprooftree}
  \AxiomC{}
  \RightLabel{($\mathbf{0}{\vdash}$)}
  \UnaryInfC{$\mathbf{0}, \Gamma \vdash C$}
\end{bprooftree}}
\\
  \begin{bprooftree}
  \AxiomC{$\Gamma \vdash A$}
  \AxiomC{$\Delta \vdash B$}
  \RightLabel{(${\vdash}\otimes$)}
  \BinaryInfC{$\Gamma, \Delta \vdash A \otimes B$}
  \end{bprooftree}
&
  \begin{bprooftree}
  \AxiomC{$A, B, \Gamma \vdash C$}
  \RightLabel{($\otimes{\vdash}$)}
  \UnaryInfC{$A \otimes B, \Gamma \vdash C$}
  \end{bprooftree}\\

  \begin{bprooftree}
  \AxiomC{$\Gamma \vdash A$}
  \AxiomC{$\Gamma \vdash B$}
  \RightLabel{(${\vdash}\with$)}
  \BinaryInfC{$\Gamma \vdash A \with B$}
  \end{bprooftree}
&
  \begin{bprooftree}
  \AxiomC{$A_i, \Gamma \vdash C$}
  \RightLabel{($\with{\vdash}$)}
  \UnaryInfC{$A_0 \with A_1, \Gamma \vdash C$}
  \end{bprooftree}\\

  \begin{bprooftree}
  \AxiomC{$\Gamma \vdash A_i$}
  \RightLabel{(${\vdash}\oplus$)}
  \UnaryInfC{$\Gamma \vdash A_0 \oplus A_1$}
  \end{bprooftree}
&
  \begin{bprooftree}
  \AxiomC{$A, \Gamma \vdash C$}
  \AxiomC{$B, \Gamma \vdash C$}
  \RightLabel{($\oplus{\vdash}$)}
  \BinaryInfC{$A \oplus B, \Gamma \vdash C$}
  \end{bprooftree}\\

  \begin{bprooftree}
  \AxiomC{$A, \Gamma \vdash B$}
  \RightLabel{($\vdash\multimap$)}
  \UnaryInfC{$\Gamma \vdash A \multimap B$}
  \end{bprooftree}
&
  \begin{bprooftree}
  \AxiomC{$\Gamma \vdash A$}
  \AxiomC{$B, \Delta \vdash C$}
  \RightLabel{($\multimap\vdash$)}
  \BinaryInfC{$A \multimap B, \Gamma, \Delta \vdash C$}
  \end{bprooftree}
\rule[-15pt]{0pt}{30pt}\\ \hline
    \end{tabular}
    \label{table:IMALL}
\end{table}
In the cut-free $\mathbf{IMALL}^-$, the following  inversion can be shown by induction on a proof.
\begin{lemma}[Inversion in $\mathbf{IMALL}^-$]\label{inversionForIMALL}
    \begin{enumerate}
        \item if $\Gamma \vdash A \multimap B$ is provable in $\mathbf{IMALL}^-$, then $A, \Gamma \vdash B$ is also provable in $\mathbf{IMALL}^-$.
        \item if $A \otimes B, \Gamma \vdash C$ is provable in $\mathbf{IMALL}^-$, then $A, B, \Gamma \vdash C$ is also provable in $\mathbf{IMALL}^-$.
    \end{enumerate}
\end{lemma}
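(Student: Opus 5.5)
We prove both items by induction on the height of a cut-free proof $\pi$ of the given sequent, with a case analysis on the last rule of $\pi$. In each case we either read off the desired sequent directly, or apply the induction hypothesis to the premises and re-apply the same rule. Since only $\mathbf{IMALL}^-$ rules are used, the transformed proof is again cut-free. It also has height at most that of $\pi$, which is convenient but not needed.

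For item 1, let the end sequent be $\Gamma \vdash A \multimap B$. Three kinds of last rule are possible.
\begin{enumerate}
\item If the last rule is $(\mathbf{id})$, then $\Gamma = [A \multimap B]$. We build $A, A\multimap B \vdash B$ cut-free by $(\multimap\vdash)$ from the axioms $A \vdash A$ and $B \vdash B$.
\item If the last rule is $({\vdash}\multimap)$, its premise is exactly $A,\Gamma\vdash B$.
\item If the last rule is $({\vdash}\top)$, this cannot occur, since the succedent is not $\top$.
\end{enumerate}
Otherwise the last rule is a left rule acting on $\Gamma$: one of $(\mathbf{1}{\vdash})$, $(\mathbf{0}{\vdash})$, $(\otimes{\vdash})$, $(\with{\vdash})$, $(\oplus{\vdash})$, $(\multimap\vdash)$.
\begin{enumerate}
\item For $(\mathbf{0}{\vdash})$, the sequent $A,\Gamma\vdash B$ is itself an instance of $(\mathbf{0}{\vdash})$, since $\mathbf 0$ still occurs in the antecedent.
\item For the one-premise left rules and $(\oplus{\vdash})$, every premise whose succedent is $A\multimap B$ has the same context up to the principal formula. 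We apply the induction hypothesis to each such premise and re-apply the rule, now with the extra $A$ in the context.
\item For $(\multimap\vdash)$ with premises $\Gamma_1\vdash C$ and $D,\Gamma_2\vdash A\multimap B$, we apply the induction hypothesis only to the right premise, obtaining $A,D,\Gamma_2\vdash B$, and then re-apply $(\multimap\vdash)$.
\end{enumerate}
No right rule other than $({\vdash}\multimap)$ can conclude a $\multimap$-succedent, so this exhausts the cases.

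Item 2 is symmetric. Let the end sequent be $A\otimes B,\Gamma\vdash C$, and distinguish whether the displayed occurrence of $A\otimes B$ is principal in the last rule.
\begin{enumerate}
\item If the last rule is $(\mathbf{id})$, then $C = A\otimes B$ and $\Gamma=\emptyset$. We derive $A,B\vdash A\otimes B$ by $({\vdash}\otimes)$ from axioms.
\item If the last rule is $(\otimes{\vdash})$ with this occurrence principal, the premise is exactly what we want.
\item If the last rule is $({\vdash}\top)$ or $(\mathbf{0}{\vdash})$, the target sequent is again an axiom instance of the same rule.
\item In all remaining cases the occurrence $A\otimes B$ lies in a context.
\begin{enumerate}
\item For a left rule acting on another formula of $\Gamma$, we apply the induction hypothesis to the premise(s) containing $A\otimes B$ and re-apply the rule.
\item For a right rule, we likewise apply the induction hypothesis to the premise(s) containing $A\otimes B$ and re-apply the rule.
\item For the context-splitting rules $({\vdash}\otimes)$ and $(\multimap\vdash)$, the occurrence goes into exactly one premise. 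We apply the induction hypothesis there only.
\item For the context-sharing rules $({\vdash}\with)$ and $(\oplus{\vdash})$, the occurrence is in both premises, and we apply the induction hypothesis to both.
\end{enumerate}
\end{enumerate}

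The main obstacle is bookkeeping rather than a conceptual difficulty. Since contexts are multisets, we must track the particular occurrence of $A\otimes B$; the concern is that another copy of $A\otimes B$ might be principal while ours sits in the context. We handle this by fixing one occurrence and doing the case split on whether that occurrence is principal. If a different copy is principal, we are in the context case and the induction hypothesis applies to that premise. Once the occurrence is fixed, every case is routine, and the axiom cases are the only places where a small new derivation has to be built.
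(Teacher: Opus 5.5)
Your proof is correct and takes the same approach as the paper, which only states that the lemma "can be shown by induction on a proof" in $\mathbf{IMALL}^-$. Your case analysis on the last rule supplies the routine details the paper omits: new derivations are built only in the $(\mathbf{id})$ cases, and $(\mathbf{0}{\vdash})$ and $({\vdash}\top)$ are handled as fresh axiom instances.
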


Next, we introduce \emph{phase semantics} for $\mathbf{IMALL}$. Several formulations of phase semantics are known for intuitionistic linear logic (Abrusci \cite{Abrusci1990}, Troelstra \cite[Chapter 8]{Troelstra1992}), but here we adopt Okada's definition \cite[Section 2]{Okada2002}, which can be applied to a semantic argument for the cut-elimination theorem.

Let $M = (M, \cdot, 1)$ be a commutative monoid, where ``$\cdot$'' is a commutative and associative binary operation, and $1$ is the neutral element. For $X, Y \subseteq M$, we define that
\[
    X \cdot Y = \{x \cdot y \mid \text{$x \in X$ and $y \in Y$}\}.
\]
In what follows, we may omit the monoid operation ``$\cdot$'' and the parentheses to simply write, e.g.,  $x y$ and $XY$ to mean $x\cdot y$ and $X \cdot Y$, respectively, when no confusion arises.  
\begin{definition}\label{theSetOfClosedSets}
    A set $D_M \subseteq \wp(M)$ is  \emph{a set of closed sets} if it satisfies the following:
\begin{enumerate}
    \item for any $D' \subseteq D_M$, $\bigcap D' \in D_M$,
    \item for any $X \in \wp(M)$ and $Y \in D_M$, $X \multimap Y = \{y \in M \mid \text{for all $x \in X$, $x y \in Y$} \}\in D_M$.
\end{enumerate}
An \emph{intuitionistic phase space} is 
a pair $(M, D_M)$ of a commutative monoid and a set of closed sets. 
The \emph{closure} function $cl\colon\wp(M)\to\wp(M)$ is defined by $cl(X) = \bigcap\{Y\in D_M\mid X\subseteq Y\}$.
\end{definition}

The closure function has the following properties.
\begin{proposition}[{Okada \cite[p. 475]{Okada2002}}]\label{closureProposition}
Let $X, Y \subseteq M$. 
\begin{multicols}{2}
\begin{enumerate}
    \item $X \subseteq cl(X)$,
    \item $cl(cl(X)) = cl(X)$,
    \item if $X \subseteq Y$, then $cl(X) \subseteq cl(Y)$,
    \item $cl(X) \cdot Y \subseteq cl(X \cdot Y)$.
\end{enumerate}
\end{multicols}
\end{proposition}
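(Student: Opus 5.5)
Items 1--3 follow from the definition of $cl$ as an intersection, together with condition 1 of Definition~\ref{theSetOfClosedSets}, which ensures that $cl(X)\in D_M$. For item 1, every $Y$ in the intersected family contains $X$, so $X\subseteq cl(X)$. For item 3, if $X\subseteq Y$, then every closed set containing $Y$ also contains $X$. So the family defining $cl(Y)$ is a subfamily of the one defining $cl(X)$, and the intersection over the larger family is smaller, giving $cl(X)\subseteq cl(Y)$. For item 2, the inclusion $cl(X)\subseteq cl(cl(X))$ is item 1 applied to $cl(X)$. Conversely, $cl(X)$ is itself a closed set containing $cl(X)$, so it belongs to the family whose intersection is $cl(cl(X))$, and hence $cl(cl(X))\subseteq cl(X)$. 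More generally, I would note once that $Z\in D_M$ implies $cl(Z)=Z$ and that $cl(X)$ is the least closed set containing $X$.

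Item 4 is the only step that uses condition 2. I would first establish the adjunction
\[
  X\cdot Y\subseteq Z \iff X\subseteq Y\multimap Z,
\]
which holds by unfolding definitions and using commutativity of the monoid, since the definition of $Y\multimap Z$ multiplies $y\cdot x$ while $X\cdot Y$ consists of products $x\cdot y$.

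Next, I would set $Z=cl(X\cdot Y)$, which is closed by condition 1. By item 1, $X\cdot Y\subseteq Z$, so the adjunction gives $X\subseteq Y\multimap Z$. By condition 2, $Y\multimap Z\in D_M$, so it is a closed set containing $X$ and therefore contains $cl(X)$. Applying the adjunction in the other direction then yields $cl(X)\cdot Y\subseteq Z=cl(X\cdot Y)$.

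The main obstacle is item 4, and in particular noticing that one should pass through the closed set $Y\multimap cl(X\cdot Y)$ rather than try to compute $cl(X)\cdot Y$ elementwise. The remaining checks are routine.
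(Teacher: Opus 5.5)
Your proof is correct and follows the standard argument. The paper gives no proof of its own and cites Okada, where items 1--3 follow from $cl(X)$ being an intersection taken in a family closed under intersections, and item 4 follows from the residuation $X\cdot Y\subseteq Z \iff X\subseteq Y\multimap Z$ (using commutativity) applied to the closed set $Y\multimap cl(X\cdot Y)$, exactly as you do.
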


The following explains why an element of $D_{M}$ is a \emph{closed set}:

\begin{proposition}
    For an intuitionistic phase space $(M, D_M)$, $D_M = \{cl(X) \mid X \subseteq M\}$ holds.
\end{proposition}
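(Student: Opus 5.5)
We prove the two inclusions separately. Both follow directly from the definition of $cl$ together with condition 1 of Definition \ref{theSetOfClosedSets}, i.e., closure of $D_M$ under arbitrary intersections. Condition 2 is not needed.

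For the inclusion $D_M \subseteq \{cl(X) \mid X \subseteq M\}$, take $Y \in D_M$. We show $cl(Y) = Y$, which exhibits $Y$ as the closure of the subset $X := Y$. By Proposition \ref{closureProposition}(1) we have $Y \subseteq cl(Y)$. Conversely, $Y$ is itself one of the sets in the family $\{Z \in D_M \mid Y \subseteq Z\}$. Hence the intersection $cl(Y)$ of this family is contained in $Y$. Together these give $cl(Y) = Y$.

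For the reverse inclusion $\{cl(X) \mid X \subseteq M\} \subseteq D_M$, fix $X \subseteq M$. Put $D' := \{Y \in D_M \mid X \subseteq Y\}$. This is a subset of $D_M$, so condition 1 of Definition \ref{theSetOfClosedSets} gives $\bigcap D' \in D_M$. By definition $\bigcap D' = cl(X)$, so $cl(X) \in D_M$.

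The only point needing care is the degenerate case $D' = \emptyset$. Condition 1 also applies to the empty subfamily, so $D_M$ contains the empty intersection, which is taken relative to the ambient set $\wp(M)$ and equals $M$. Then $M \in D_M$ and $X \subseteq M$, so $D'$ is in fact never empty, and $cl(X)$ is a genuine intersection of closed sets.

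Alternatively, $M \in D_M$ follows from condition 2 without using the empty intersection: for any $Y \in D_M$, $\emptyset \multimap Y = M$. This argument presupposes $D_M \neq \emptyset$, which again comes from condition 1. With this convention fixed, the argument is complete.
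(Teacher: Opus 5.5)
Your proof is correct and is the standard verification. For $Y\in D_M$ you get $cl(Y)=Y$ because $Y$ is itself one of the sets being intersected. For arbitrary $X\subseteq M$, $cl(X)=\bigcap\{Y\in D_M\mid X\subseteq Y\}\in D_M$ follows directly from condition 1. The paper states this proposition without proof, so there is no different route to compare against.

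Your treatment of the empty family, reading $\bigcap\emptyset$ as $M$ so that $M\in D_M$, is also the convention the paper implicitly relies on, e.g.\ $\llbracket \top \rrbracket^V = M$ and $\mathbb{F}=\emptyset$ in the syntactic model. Your ``alternative'' via condition 2 is not genuinely independent, since it still needs the empty intersection to make $D_M$ nonempty, but it does no harm.
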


An \emph{intuitionistic phase model} is 
a triple $(M, D_M, V)$ where $(M, D_M)$ is an intuitionistic phase space and $V\colon \mathsf{Prop}\to D_M$ is a \emph{valuation function}. 
The \emph{interpretation} $\llbracket A \rrbracket^V$ of $A \in \mathcal{L}$ in an intuitionistic phase model $(M, D_M, V)$ is defined by induction as follows:
\begin{multicols}{2}
\begin{itemize}
    \item $\llbracket p \rrbracket^V = V(p)$,
    \item $\llbracket \mathbf{1} \rrbracket^V = cl(\{1\})$,
    \item $\llbracket \top \rrbracket^V = M$,
    \item $\llbracket \mathbf{0} \rrbracket^V = cl(\emptyset)$,
    \item $\llbracket A_0 \otimes A_1 \rrbracket^V = cl(\llbracket A_0 \rrbracket^V \cdot \llbracket A_1 \rrbracket^V)$,
    \item $\llbracket A_0 \with A_1 \rrbracket^V = \llbracket A_0 \rrbracket^V \cap \llbracket A_1 \rrbracket^V$,
    \item $\llbracket A_0 \oplus A_1 \rrbracket^V = cl(\llbracket A_0 \rrbracket^V \cup \llbracket A_1 \rrbracket^V)$,
\end{itemize}
\end{multicols}
\begin{itemize}
    \item $\llbracket A_0 \multimap A_1 \rrbracket^V = \llbracket A_0 \rrbracket^V \multimap \llbracket A_1 \rrbracket^V = \{b \in M \mid \text{for all $a \in \llbracket A_0 \rrbracket^V$, $ab \in \llbracket A_1 \rrbracket^V$}\}$.
\end{itemize}

The following can be shown easily by induction on $A$. 
\begin{proposition}
    Let $(M, D_M, V)$ be an intuitionistic phase model. Then, $\llbracket A \rrbracket^V \in D_M$ for any $A \in \mathcal{L}$.
\end{proposition}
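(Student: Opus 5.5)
We argue by structural induction on $A \in \mathcal{L}$, using only the two closure conditions of Definition~\ref{theSetOfClosedSets} together with the fact that every set of the form $cl(X)$ lies in $D_M$.

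First we record that $cl(X) \in D_M$ for every $X \subseteq M$. By definition $cl(X) = \bigcap\{Y \in D_M \mid X \subseteq Y\}$, which is the intersection of a subfamily $D' \subseteq D_M$, so it belongs to $D_M$ by condition~1. This covers the degenerate case $D' = \emptyset$ as well: then $\bigcap D' = M$, and condition~1 applied to the empty subfamily forces $M \in D_M$.

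The base and inductive cases then go as follows.
\begin{itemize}
\item For a propositional variable, $\llbracket p \rrbracket^V = V(p) \in D_M$, since $V$ takes values in $D_M$.
\item For $\top$, $\llbracket \top \rrbracket^V = M = \bigcap \emptyset \in D_M$ by condition~1.
\item For $\mathbf{1}$, $\mathbf{0}$, $A_0 \otimes A_1$ and $A_0 \oplus A_1$, the interpretation is of the form $cl(X)$ for some $X \subseteq M$, namely $\{1\}$, $\emptyset$, $\llbracket A_0 \rrbracket^V \cdot \llbracket A_1 \rrbracket^V$ and $\llbracket A_0 \rrbracket^V \cup \llbracket A_1 \rrbracket^V$ respectively. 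So it lies in $D_M$ by the preliminary remark, and no induction hypothesis is needed.
\item For $A_0 \with A_1$, the induction hypothesis gives $\llbracket A_0 \rrbracket^V, \llbracket A_1 \rrbracket^V \in D_M$. Their intersection is $\bigcap D'$ with $D' = \{\llbracket A_0 \rrbracket^V, \llbracket A_1 \rrbracket^V\}$, so it is in $D_M$ by condition~1.
\item For $A_0 \multimap A_1$, the induction hypothesis gives $\llbracket A_1 \rrbracket^V \in D_M$, while $\llbracket A_0 \rrbracket^V$ is merely some subset of $M$. Condition~2 then yields $\llbracket A_0 \rrbracket^V \multimap \llbracket A_1 \rrbracket^V \in D_M$.
\end{itemize}

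None of these steps is a real obstacle. The only points needing care are the treatment of $\top$ via the empty intersection, and the observation that for $\multimap$ condition~2 requires only the consequent to be closed.
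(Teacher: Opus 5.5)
Your proof is correct and is the same induction on $A$ that the paper indicates without writing out. Each case is handled properly: valuation values, the sets $cl(X)$ obtained as intersections via condition~1, $\top$ as the empty intersection, $\with$ via condition~1 and the induction hypothesis, and $\multimap$ via condition~2 with the induction hypothesis needed only for the consequent.
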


A formula $A$ is \emph{true} in an intuitionistic phase model $(M, D_M, V)$ if $1 \in \llbracket A \rrbracket^V$. Then the following soundness holds. 
\begin{proposition}[{Okada \cite[Theorem 3.1]{Okada2002}}]\label{SoundnessForIMALL}
    If a sequent $A_1, \ldots, A_n \vdash C$ is provable in $\mathbf{IMALL}$, then $\llbracket A_1 \rrbracket^V \cdots \llbracket A_n \rrbracket^V \subseteq \llbracket C \rrbracket^V$ for any intuitionistic phase model $(M, D_M, V)$, where the zero-ary product of the monoid operation is $\{1\}$. In particular, if $\vdash A$ is provable in $\mathbf{IMALL}$, then $A$ is true in any intuitionistic phase model.
\end{proposition}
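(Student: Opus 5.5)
The plan is to prove soundness by induction on the height of a proof of $A_1, \ldots, A_n \vdash C$ in $\mathbf{IMALL}$. We write $\llbracket \Gamma \rrbracket^V$ for the product $\llbracket A_1 \rrbracket^V \cdots \llbracket A_n \rrbracket^V$, with the empty product being $\{1\}$. The claim for each rule is that if the inclusions hold for the premises, then the inclusion holds for the conclusion. We will use, without further comment, that every $\llbracket C \rrbracket^V$ is closed, that $cl$ satisfies Proposition~\ref{closureProposition}, and one auxiliary fact: for $X, Y \subseteq M$ and $Z \in D_M$,
\[
X \cdot Y \subseteq Z \iff Y \subseteq X \multimap Z .
\]
This residuation is immediate from the definition of $\multimap$. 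It yields a second fact: if $X \cdot Y \subseteq Z$ with $Z$ closed, then $cl(X)\cdot Y \subseteq Z$. To see this, note $X \subseteq Y \multimap Z$, which is closed, so $cl(X) \subseteq Y\multimap Z$. Equivalently, one can use item 4 of Proposition~\ref{closureProposition}, which gives $cl(X) \cdot Y \subseteq cl(X\cdot Y) \subseteq Z$.

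The case analysis goes as follows.

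\textbf{(id).} $\llbracket A\rrbracket^V \subseteq \llbracket A\rrbracket^V$ trivially.

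\textbf{Cut.} Assume $\llbracket\Gamma\rrbracket \subseteq \llbracket A\rrbracket$ and $\llbracket A\rrbracket\llbracket\Delta\rrbracket \subseteq \llbracket B\rrbracket$. By monotonicity of the product, $\llbracket\Gamma\rrbracket\llbracket\Delta\rrbracket \subseteq \llbracket A\rrbracket\llbracket\Delta\rrbracket \subseteq \llbracket B\rrbracket$.

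\textbf{$({\vdash}\mathbf{1})$.} $\{1\}\subseteq cl(\{1\})$.

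\textbf{$(\mathbf{1}{\vdash})$.} Since $\{1\}\cdot\llbracket\Gamma\rrbracket \subseteq \llbracket C\rrbracket$, the closure fact gives $cl(\{1\})\llbracket\Gamma\rrbracket\subseteq\llbracket C\rrbracket$.

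\textbf{$({\vdash}\top)$.} Every set is contained in $M$.

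\textbf{$(\mathbf{0}{\vdash})$.} We have $\emptyset\cdot\llbracket\Gamma\rrbracket=\emptyset$, so the closure fact gives $cl(\emptyset)\llbracket\Gamma\rrbracket\subseteq\llbracket C\rrbracket$.

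\textbf{$({\vdash}\otimes)$.} $\llbracket\Gamma\rrbracket\llbracket\Delta\rrbracket\subseteq \llbracket A\rrbracket\llbracket B\rrbracket\subseteq cl(\llbracket A\rrbracket\llbracket B\rrbracket)$.

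\textbf{$(\otimes{\vdash})$.} By associativity and commutativity, $(\llbracket A\rrbracket\llbracket B\rrbracket)\llbracket\Gamma\rrbracket\subseteq\llbracket C\rrbracket$, and the closure fact lifts this to $cl(\llbracket A\rrbracket\llbracket B\rrbracket)$.

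\textbf{$({\vdash}\with)$.} Both premise inclusions give $\llbracket\Gamma\rrbracket$ inside the intersection.

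\textbf{$(\with{\vdash})$.} Use $\llbracket A_0\rrbracket\cap\llbracket A_1\rrbracket\subseteq\llbracket A_i\rrbracket$ together with monotonicity.

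\textbf{$({\vdash}\oplus)$.} Use $\llbracket A_i\rrbracket\subseteq cl(\llbracket A_0\rrbracket\cup\llbracket A_1\rrbracket)$.

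\textbf{$(\oplus{\vdash})$.} The product distributes over unions, so $(\llbracket A\rrbracket\cup\llbracket B\rrbracket)\llbracket\Gamma\rrbracket\subseteq\llbracket C\rrbracket$, and the closure fact lifts this to the closure of the union.

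\textbf{$({\vdash}\multimap)$.} This is exactly the residuation equivalence.

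\textbf{$(\multimap{\vdash})$.} Take $f\in\llbracket A\rrbracket\multimap\llbracket B\rrbracket$, $g\in\llbracket\Gamma\rrbracket$, $d\in\llbracket\Delta\rrbracket$. Then $g\in\llbracket A\rrbracket$, so $gf\in\llbracket B\rrbracket$, and hence $(gf)d\in\llbracket C\rrbracket$.

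The only step needing genuine care is the set of left rules whose principal formula is interpreted by a closure ($\mathbf{1}$, $\mathbf{0}$, $\otimes$, $\oplus$). These are handled uniformly by the lemma that $cl(X)Y\subseteq Z$ whenever $XY\subseteq Z$ with $Z\in D_M$. That lemma is where the closure condition 2 of Definition~\ref{theSetOfClosedSets} (or Proposition~\ref{closureProposition}(4)) is essential. Multiset reordering is harmless by commutativity of the monoid.

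For the particular case, take $n=0$: provability of $\vdash A$ gives $\{1\}\subseteq\llbracket A\rrbracket^V$, i.e.\ $1\in\llbracket A\rrbracket^V$, so $A$ is true in every intuitionistic phase model.
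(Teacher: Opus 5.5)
Your proof is correct and follows the same approach as the paper: induction on the derivation with one case per rule. The paper defers the $\mathbf{IMALL}$ cases to Okada and carries out this rule-by-rule induction explicitly only for the fixpoint rules in its soundness lemma for $\mu\mathbf{IMALL}$. Your uniform lemma that $XY \subseteq Z$ with $Z \in D_M$ implies $cl(X)\,Y \subseteq Z$ correctly covers the left rules for $\mathbf{1}$, $\mathbf{0}$, $\otimes$ and $\oplus$, the connectives whose interpretations are closures.
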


Below, we outline the proof of cut-free completeness of $\mathbf{IMALL}$ for intuitionistic phase models. The proof relies on the construction of a syntactic intuitionistic phase model. 

\begin{definition}
    For a formula $C$, we define 
    \[
    \mathsf{Pr}_{cf}(C) = \{\Gamma \mid \text{$\Gamma \vdash C$ is provable without $(Cut)$ in $\mathbf{IMALL}$}\}.
    \]
    Define the syntactic intuitionistic phase model $(M, D_M, V)$ as follows:
    \begin{itemize}
        \item The base set $M$ of the monoid is the set of all finite multisets of formulas.
        \item The monoid operation is the union of multisets $\cup$.
        \item The neutral element of the monoid is the empty multiset $\emptyset$.
        \item $X \in D_M$ iff $X = \bigcap\{\mathsf{Pr}_{cf}(C) \mid C \in \mathbb{F}\}$ for some set $\mathbb{F}$ of formulas.
        \item $V(p) := \mathsf{Pr}_{cf}(p)$.
    \end{itemize}
\end{definition}

This $D_M$ satisfies the two conditions of the set of closed sets in Definition \ref{theSetOfClosedSets}. The first condition is clearly satisfied. To show the second one, fix any $X \in \wp(M)$ and $Y = \bigcap \{\mathsf{Pr}_{cf}(C) \mid C \in \mathbb{F}\} \in D_M$. It suffices to show that $X \multimap Y = X \multimap \bigcap \{\mathsf{Pr}_{cf}(C) \mid C \in \mathbb{F}\} = \bigcap \{\mathsf{Pr}_{cf}(\bigotimes\Gamma \multimap C) \mid \Gamma \in X, C \in \mathbb{F}\}$, where $\bigotimes \emptyset = \mathbf{1}$ and $\bigotimes (\Gamma' \cup [A]) = (\bigotimes \Gamma') \otimes A$. This can be shown by Lemma \ref{inversionForIMALL} $($Inversion$)$.
    
The following lemma is a crucial step to the cut-free completeness.
\begin{proposition}[{Okada \cite[Lemma 3.6]{Okada2002}}]\label{OkadaIMALL}
    Let $(M, D_M, V)$ be the syntactic intuitionistic phase model. For any formula $A \in \mathcal{L}$, we have $[A] \in \llbracket A \rrbracket^V \subseteq \mathsf{Pr}_{cf}(A)$, where ``$[A]$'' denotes the singleton multiset of $A$.
\end{proposition}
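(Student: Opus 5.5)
We prove both inclusions $[A] \in \llbracket A \rrbracket^V$ and $\llbracket A \rrbracket^V \subseteq \mathsf{Pr}_{cf}(A)$ simultaneously by induction on $A$. Two observations are used throughout. First, $\mathsf{Pr}_{cf}(C)\in D_M$ for every $C$, so any $X\subseteq \mathsf{Pr}_{cf}(C)$ gives $cl(X)\subseteq \mathsf{Pr}_{cf}(C)$. Second, to show a multiset lies in a closed set $Y=\bigcap\{\mathsf{Pr}_{cf}(C)\mid C\in\mathbb{F}\}$ (for instance $Y=\llbracket A\rrbracket^V$), it suffices to show, for each such $C$, that it lies in $\mathsf{Pr}_{cf}(C)$.

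\textbf{Base cases.} For $p$, $[p]\in\mathsf{Pr}_{cf}(p)=V(p)$ by $(\mathbf{id})$. For $\mathbf{1}$, $\emptyset\in\mathsf{Pr}_{cf}(\mathbf{1})$ by $({\vdash}\mathbf{1})$, so $cl(\{\emptyset\})\subseteq\mathsf{Pr}_{cf}(\mathbf{1})$. Moreover, $[\mathbf{1}]\in cl(\{\emptyset\})$: whenever $\emptyset\in\mathsf{Pr}_{cf}(C)$, $(\mathbf{1}{\vdash})$ gives $[\mathbf{1}]\in\mathsf{Pr}_{cf}(C)$. For $\top$, $M=\mathsf{Pr}_{cf}(\top)$ by $({\vdash}\top)$. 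For $\mathbf{0}$, $[\mathbf{0}]$ lies in every $\mathsf{Pr}_{cf}(C)$ by $(\mathbf{0}{\vdash})$, so $[\mathbf{0}]\in cl(\emptyset)$; also $cl(\emptyset)\subseteq\mathsf{Pr}_{cf}(\mathbf{0})$ because $\mathsf{Pr}_{cf}(\mathbf{0})$ is closed.

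\textbf{Multiplicatives and additives, right inclusions.} For $\otimes$: if $\Gamma\in\llbracket A\rrbracket^V$ and $\Delta\in\llbracket B\rrbracket^V$, the IH gives $\Gamma\vdash A$ and $\Delta\vdash B$ cut-free, so $({\vdash}\otimes)$ yields $\llbracket A\rrbracket^V\llbracket B\rrbracket^V\subseteq\mathsf{Pr}_{cf}(A\otimes B)$, and closedness gives the closure. For $\oplus$ the argument is the same using $({\vdash}\oplus)$. For $\with$ we use $({\vdash}\with)$ on the intersection. For $\multimap$: if $\Gamma\in\llbracket A\rrbracket^V\multimap\llbracket B\rrbracket^V$, then since $[A]\in\llbracket A\rrbracket^V$ by the IH, we get $[A]\cup\Gamma\in\llbracket B\rrbracket^V\subseteq\mathsf{Pr}_{cf}(B)$, and $({\vdash}\multimap)$ finishes.

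\textbf{Left memberships $[A]\in\llbracket A\rrbracket^V$.} These are the main obstacle, since the sets are closures and we have to argue through every $C$ with $\llbracket A\rrbracket^V\subseteq\mathsf{Pr}_{cf}(C)$.
\begin{itemize}
\item For $\otimes$: suppose $cl(\llbracket A\rrbracket^V\llbracket B\rrbracket^V)\subseteq \mathsf{Pr}_{cf}(C)$. By the IH, $[A]\cup[B]$ lies in the product, so $A,B\vdash C$ is cut-free provable, and $(\otimes{\vdash})$ gives $[A\otimes B]\in\mathsf{Pr}_{cf}(C)$. Since the closure is the intersection of all such $\mathsf{Pr}_{cf}(C)$ (every closed set being such an intersection), $[A\otimes B]$ lies in the closure.
\item For $\oplus$: if $\llbracket A\rrbracket^V\cup\llbracket B\rrbracket^V\subseteq\mathsf{Pr}_{cf}(C)$, then $A\vdash C$ and $B\vdash C$, so $(\oplus{\vdash})$ gives $[A\oplus B]\in\mathsf{Pr}_{cf}(C)$.
\item For $\with$: we need $[A\with B]\in\llbracket A\rrbracket^V$, and the general fact needed is that $[A]\in\llbracket A\rrbracket^V$ implies $[A\with B]\in\llbracket A\rrbracket^V$. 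Write $\llbracket A\rrbracket^V=\bigcap\{\mathsf{Pr}_{cf}(C)\mid C\in\mathbb{F}\}$. Each $\mathsf{Pr}_{cf}(C)$ containing $[A]$ also contains $[A\with B]$ by $(\with{\vdash})$, so $[A\with B]\in\llbracket A\rrbracket^V$. The same holds for $\llbracket B\rrbracket^V$.
\item For $\multimap$: we must show $[A\multimap B]\cup\Gamma\in\llbracket B\rrbracket^V$ for every $\Gamma\in\llbracket A\rrbracket^V$. Take any $C$ with $\llbracket B\rrbracket^V\subseteq\mathsf{Pr}_{cf}(C)$. Then $[B]\in\llbracket B\rrbracket^V$ gives $B\vdash C$, and $\Gamma\in\llbracket A\rrbracket^V\subseteq\mathsf{Pr}_{cf}(A)$ gives $\Gamma\vdash A$. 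Now $(\multimap{\vdash})$ yields $A\multimap B,\Gamma\vdash C$.
\end{itemize}

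The uniform device throughout is the representation of closed sets as intersections of $\mathsf{Pr}_{cf}(C)$, which reduces every membership in a closure to applying a single left rule.
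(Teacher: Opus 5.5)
Your proof is correct and takes essentially the paper's approach: a simultaneous induction on $A$ for $[A]\in\llbracket A\rrbracket^V$ and $\llbracket A\rrbracket^V\subseteq\mathsf{Pr}_{cf}(A)$, where in the $\multimap$ case each half uses the opposite half of the induction hypothesis on the antecedent, exactly as in the paper. The paper writes out only the $\multimap$ case; you also fill in the base, multiplicative and additive cases by writing each closed set as an intersection of sets $\mathsf{Pr}_{cf}(C)$ and applying the corresponding left rule.
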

\begin{proof}
We show (i) $[A] \in \llbracket A \rrbracket^V$ and (ii) $\llbracket A \rrbracket^V \subseteq \mathsf{Pr}_{cf}(A)$ by simultaneous induction on the complexity of $A$.  The crucial case is when $A = A_0 \multimap A_1$. In what follows, we focus exclusively on this case, where the proof of (i) relies on the induction hypothesis for (ii), and conversely, the proof of (ii) utilizes the induction hypothesis for (i).

    (i) We show that $[A_0 \multimap A_1]$ $\in \llbracket A_0 \multimap A_1 \rrbracket^V$ $=$ $\{\Delta \in M \mid \Gamma \cup \Delta \in \llbracket A_1 \rrbracket^V \text{ for all $\Gamma \in \llbracket A_0 \rrbracket^V$}\}$. Fix any $\Gamma \in \llbracket A_0 \rrbracket^V$. We show that $\Gamma  \cup [A_0 \multimap A_1]  \in \llbracket A_1 \rrbracket^V$. By   $\llbracket A_1 \rrbracket^V \in D_M$, we can find some $\mathbb{F} \subseteq \mathcal{L}$ such that $\llbracket A_1 \rrbracket^V = \bigcap\{\mathsf{Pr}_{cf}(C) \mid C \in \mathbb{F}\}$. Fix any $C \in \mathbb{F}$. It suffices to show that $\Gamma \cup [A_0 \multimap A_1]  \in \mathsf{Pr}_{cf}(C)$, i.e., $\Gamma , A_0 \multimap A_1\vdash C$ is provable in $\mathbf{IMALL}^-$. By induction hypothesis for (ii), we have that $\llbracket A_0 \rrbracket^V \subseteq \mathsf{Pr}_{cf}(A_0)$. It follows from $\Gamma \in \llbracket A_0 \rrbracket^V$ that $\Gamma \in \mathsf{Pr}_{cf}(A_0)$, i.e., $\Gamma \vdash A_0$ is provable in $\mathbf{IMALL}^-$. Moreover, by induction hypothesis for (i), we obtain $[A_1] \in \llbracket A_1 \rrbracket^V = \bigcap\{\mathsf{Pr}_{cf}(C) \mid C \in \mathbb{F}\}$. Therefore, $[A_1] \in \mathsf{Pr}_{cf}(C)$, i.e., $A_1 \vdash C$ is provable in $\mathbf{IMALL}^-$. By applying the $(\multimap\vdash)$ rule to $\Gamma \vdash A_0$ and $A_1 \vdash C$, we obtain a proof of $\Gamma, A_0 \multimap A_1, \vdash C$ in $\mathbf{IMALL}^-$.

    (ii) We need to show that $\llbracket A_0 \multimap A_1 \rrbracket^V \subseteq \mathsf{Pr}_{cf}(A_0 \multimap A_1)$. Fix any $\Delta \in M$ and suppose that $\Gamma \cup \Delta \in \llbracket A_1 \rrbracket^V$ for all $\Gamma \in \llbracket A_0 \rrbracket^V$.  We show that $\Delta \in \mathsf{Pr}_{cf}(A_0 \multimap A_1)$, i.e., $\Delta \vdash A_0 \multimap A_1$ is provable in $\mathbf{IMALL}^-$. By induction hypothesis for (i), we have that $[A_0] \in \llbracket A_0 \rrbracket^V$. 
    By the initial supposition, we get $[A_{0}] \cup \Delta 
    \in \llbracket A_1 \rrbracket^V$. 
    Moreover, by induction hypothesis for (ii), we have that $\llbracket A_1 \rrbracket^V \subseteq \mathsf{Pr}_{cf}(A_1)$. 
    Therefore, $[A_0] \cup \Delta \in \mathsf{Pr}_{cf}(A_1)$, i.e., $A_0, \Delta \vdash A_1$ is provable in $\mathbf{IMALL}^-$. By applying the $(\vdash \multimap)$ rule to this sequent, we obtain a proof of $\Delta \vdash A_0 \multimap A_1$ in $\mathbf{IMALL}^-$.
\end{proof}

\if0
\begin{remark}
    In the case of $\multimap$, we used (ii) in induction hypothesis to show (i), and we used (i) in induction hypothesis to show (ii). This case demonstrates the effect of simultaneous induction.
\end{remark}
\fi

We obtain the cut-free completeness immediately.
\begin{proposition}[Cut-free Completeness for $\mathbf{IMALL}$]\label{CutFreeCompletenessForIMALL}
    If a formula $A \in \mathcal{L}$ is true in any intuitionistic phase model, then $\vdash A$ is provable in $\mathbf{IMALL}^-$.
\end{proposition}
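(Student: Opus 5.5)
The plan is to instantiate the hypothesis with the syntactic intuitionistic phase model $(M, D_M, V)$ and then read off a cut-free proof using Proposition \ref{OkadaIMALL}.

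First I check that the syntactic model is a legitimate intuitionistic phase model, so that the assumption that $A$ is true in every model applies to it. $D_M$ is closed under arbitrary intersections, since an intersection of intersections of sets $\mathsf{Pr}_{cf}(C)$ is again of that form, with the union of the index sets. It is also closed under $X \multimap Y$; this is the identity $X \multimap \bigcap\{\mathsf{Pr}_{cf}(C)\mid C\in\mathbb{F}\} = \bigcap\{\mathsf{Pr}_{cf}(\bigotimes\Gamma\multimap C)\mid \Gamma\in X, C\in\mathbb{F}\}$ noted after the definition.

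\begin{itemize}
\item For the inclusion $\supseteq$, take $\Delta$ with $\Delta\vdash\bigotimes\Gamma\multimap C$ cut-free. Lemma \ref{inversionForIMALL}(1) gives $\bigotimes\Gamma,\Delta\vdash C$, and repeated use of Lemma \ref{inversionForIMALL}(2) gives $\Gamma,\Delta\vdash C$. When $\Gamma=\emptyset$ we also have to remove a leading $\mathbf{1}$ from the antecedent, which needs a small additional inversion for $(\mathbf{1}{\vdash})$, provable by the same induction on proofs.
\item For the inclusion $\subseteq$, start from $\Gamma,\Delta\vdash C$ and apply $(\otimes{\vdash})$ (and $(\mathbf{1}{\vdash})$ when $\Gamma=\emptyset$), then $({\vdash}\multimap)$.
\end{itemize}

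Finally, $V(p)=\mathsf{Pr}_{cf}(p)\in D_M$, taking $\mathbb{F}=\{p\}$.

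Next, since $A$ is true in this model, we have $1\in\llbracket A\rrbracket^V$. Here the neutral element $1$ is the empty multiset $\emptyset$. By part (ii) of Proposition \ref{OkadaIMALL}, $\llbracket A\rrbracket^V\subseteq\mathsf{Pr}_{cf}(A)$, so $\emptyset\in\mathsf{Pr}_{cf}(A)$. By the definition of $\mathsf{Pr}_{cf}$, this means $\vdash A$ is provable without $(Cut)$, i.e. in $\mathbf{IMALL}^-$.

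The only nontrivial ingredient is Proposition \ref{OkadaIMALL}, which is already available, so the argument itself is immediate. The one point needing care is the closure of $D_M$ under $\multimap$, because this is what makes the syntactic structure a genuine phase model. That in turn rests on the inversion lemma, including the extra $\mathbf{1}$-inversion for the empty-multiset case.
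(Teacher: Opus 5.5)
Your proposal is correct and matches the paper's proof: you instantiate the hypothesis at the syntactic model and combine $\emptyset\in\llbracket A\rrbracket^V$ with the inclusion $\llbracket A\rrbracket^V\subseteq\mathsf{Pr}_{cf}(A)$ of Proposition~\ref{OkadaIMALL}, and your check that this model is a genuine phase model is the one the paper makes just before that proposition. Your remark that a $(\mathbf{1}{\vdash})$-inversion is needed beyond Lemma~\ref{inversionForIMALL} is right and is glossed over in the paper; with the paper's literal definition $\bigotimes(\Gamma'\cup[A])=(\bigotimes\Gamma')\otimes A$ the $\mathbf{1}$ occurs for every $\Gamma$, not only for $\Gamma=\emptyset$, but the same inversion covers every case.
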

\begin{proof}
    Assume that $A$ is true in any intuitionistic phase model. Then, $A$ is true in the syntactic intuitionistic phase model $(M, D_M, V)$, meaning $1 = \emptyset \in \llbracket A \rrbracket^V$. By Lemma \ref{OkadaIMALL}, we get  $\llbracket A \rrbracket^V \subseteq \mathsf{Pr}_{cf}(A)$. Therefore, we conclude that $\emptyset \in \mathsf{Pr}_{cf}(A)$, which implies that $\vdash A$ is provable in $\mathbf{IMALL}^-$.
\end{proof}

By combining the soundness and the cut-free completeness, we can prove the cut-elimination semantically.
\begin{proposition}[Cut-elimination for $\mathbf{IMALL}$]
    If a sequent $\vdash A$ is provable in $\mathbf{IMALL}$, then it is provable in $\mathbf{IMALL}^-$.
\end{proposition}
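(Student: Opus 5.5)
The plan is to chain the two semantic results already established, soundness (Proposition \ref{SoundnessForIMALL}) and cut-free completeness (Proposition \ref{CutFreeCompletenessForIMALL}). No syntactic reduction of cuts is needed. The whole weight of the argument rests on the syntactic intuitionistic phase model, which was built from cut-free provability, so that truth in it already certifies cut-free derivability.

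Concretely, assume $\vdash A$ is provable in $\mathbf{IMALL}$, possibly using $(Cut)$.

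\begin{enumerate}
\item \textbf{Soundness.} Proposition \ref{SoundnessForIMALL} gives, for every intuitionistic phase model $(M, D_M, V)$, the inclusion $\{1\} \subseteq \llbracket A \rrbracket^V$, since the antecedent is empty and the zero-ary product is $\{1\}$. Hence $1 \in \llbracket A \rrbracket^V$, i.e., $A$ is true in every intuitionistic phase model. This step must cover $(Cut)$, but the soundness statement is for full $\mathbf{IMALL}$; the cut case reduces to monotonicity of the monoid product on subsets together with $X \subseteq Y \Rightarrow ZX \subseteq ZY$.
\item \textbf{Cut-free completeness.} Proposition \ref{CutFreeCompletenessForIMALL} then yields that $\vdash A$ is provable in $\mathbf{IMALL}^-$.
\end{enumerate}

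The only point deserving care, though it is already discharged earlier in the paper, is that the syntactic model is a genuine intuitionistic phase model. This means checking that $D_M$ is closed under arbitrary intersections and under $X \multimap Y$, the latter via Lemma \ref{inversionForIMALL}. It also requires that $V(p)=\mathsf{Pr}_{cf}(p) \in D_M$. Only then does soundness apply to it, and Proposition \ref{OkadaIMALL} transfers $\emptyset \in \llbracket A\rrbracket^V$ to $\emptyset \in \mathsf{Pr}_{cf}(A)$. Granting those earlier results, there is no real obstacle here; the substantive work, the simultaneous induction in Proposition \ref{OkadaIMALL}, has already been done.

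The same argument generalizes to arbitrary sequents $A_1,\ldots,A_n \vdash C$. One replaces the sequent by $\vdash A_1 \multimap (\cdots \multimap C)$, which is provable in $\mathbf{IMALL}$ by repeated $({\vdash}\multimap)$. One then applies the above, and recovers a cut-free proof of the original sequent by repeated use of Lemma \ref{inversionForIMALL}(1).
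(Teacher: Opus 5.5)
Your proposal is correct and follows the paper's proof exactly. Soundness (Proposition \ref{SoundnessForIMALL}) makes $A$ true in every intuitionistic phase model, including the syntactic one, and cut-free completeness (Proposition \ref{CutFreeCompletenessForIMALL}) then yields a proof of $\vdash A$ in $\mathbf{IMALL}^-$; your curried extension to general sequents, via $A_1 \multimap (\cdots \multimap C)$ and only part (1) of Lemma \ref{inversionForIMALL}, is a valid minor variant of the paper's $\bigotimes\Gamma \multimap C$ route.
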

\begin{proof}
    Assume that $\vdash A$ is provable in $\mathbf{IMALL}$. By Proposition \ref{SoundnessForIMALL}, the formula $A$ is true in any intuitionistic phase model. Then, by Proposition \ref{CutFreeCompletenessForIMALL}, the sequent $\vdash A$ is provable in $\mathbf{IMALL}^-$.
\end{proof}

This theorem can be extended to general sequents by inversion lemmas. 
\begin{corollary}
    If a sequent $\Gamma \vdash C$ is provable in $\mathbf{IMALL}$, then it is provable in $\mathbf{IMALL}^-$.
\end{corollary}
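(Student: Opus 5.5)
The plan is to reduce the general sequent to one with empty antecedent, apply the cut-elimination proposition just proved, and then recover the original sequent inside $\mathbf{IMALL}^-$ using Lemma \ref{inversionForIMALL}. Write $\Gamma = [A_1,\ldots,A_n]$ and consider the formula $A_1 \multimap (A_2 \multimap \cdots (A_n \multimap C)\cdots)$, abbreviated $\Gamma \multimap C$. When $n=0$ this is just $C$.

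First, assume $\Gamma \vdash C$ is provable in $\mathbf{IMALL}$. Applying $({\vdash}\multimap)$ $n$ times, first discharging $A_n$, then $A_{n-1}$, and so on, yields a proof of $\vdash \Gamma \multimap C$ in $\mathbf{IMALL}$. This step is purely syntactic, since $({\vdash}\multimap)$ moves any antecedent formula to the right. Second, the proposition (Cut-elimination for $\mathbf{IMALL}$) gives a proof of $\vdash \Gamma \multimap C$ in $\mathbf{IMALL}^-$.

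Third, we peel the implications off again by $n$ applications of Lemma \ref{inversionForIMALL}(1), each of which stays inside $\mathbf{IMALL}^-$. From $\vdash A_1 \multimap (\cdots)$ we obtain $A_1 \vdash A_2 \multimap (\cdots)$. From that we obtain $A_2, A_1 \vdash A_3 \multimap(\cdots)$, and we continue until reaching $A_n,\ldots,A_1 \vdash C$. Since antecedents are multisets, this sequent is exactly $\Gamma \vdash C$, and the proof is cut-free.

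There is essentially no obstacle, because inversion is already available as Lemma \ref{inversionForIMALL}. The only point to check is that the inversion lemma holds for an arbitrary antecedent $\Gamma$ and not just an empty one, which its statement grants. If we want to avoid iterating, an alternative is to use $\bigotimes\Gamma \multimap C$. We would then invert once with part (1) to get $\bigotimes\Gamma \vdash C$, and repeatedly apply part (2) to split the tensors, with a $(\mathbf{1}{\vdash})$-inversion at the bottom that would need an extra small induction. The iterated-$\multimap$ route avoids that, so I would use it.
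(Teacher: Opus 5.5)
Your proof is correct and follows the paper's strategy. You move the whole antecedent to the right, eliminate cuts for the resulting sequent with empty antecedent, and recover $\Gamma \vdash C$ in $\mathbf{IMALL}^-$ by inversion. The paper invokes Proposition~\ref{SoundnessForIMALL} and Proposition~\ref{CutFreeCompletenessForIMALL} directly, which is exactly what the cut-elimination proposition you cite packages.

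The one real difference is how $\Gamma$ is encoded. The paper uses $\bigotimes\Gamma \multimap C$, inverting once with part~(1) of Lemma~\ref{inversionForIMALL} and then repeatedly with part~(2). You curry into $A_1 \multimap (\cdots (A_n \multimap C)\cdots)$ and need only part~(1), applied $n$ times.

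Your version is the more economical one here. Under the paper's convention $\bigotimes\emptyset = \mathbf{1}$ and $\bigotimes(\Gamma' \cup [A]) = (\bigotimes\Gamma') \otimes A$, the formula $\bigotimes\Gamma$ always has a $\mathbf{1}$ at its base. So deriving $\vdash \bigotimes\Gamma \multimap C$ from $\Gamma \vdash C$ tacitly needs a $(\mathbf{1}{\vdash})$ step. Recovering $\Gamma \vdash C$ afterwards needs an inversion for $(\mathbf{1}{\vdash})$, from a cut-free proof of $\mathbf{1}, \Gamma \vdash C$ to one of $\Gamma \vdash C$, which Lemma~\ref{inversionForIMALL} does not state. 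That extra inversion is an easy induction on cut-free proofs: the only non-routine case is the axiom $\mathbf{1} \vdash \mathbf{1}$, which is replaced by $({\vdash}\mathbf{1})$. So the paper's argument is sound once it is supplied, but your curried encoding sidesteps it entirely, as you anticipated.
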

\begin{proof}
    Assume that $\Gamma \vdash C$ is provable in $\mathbf{IMALL}$. By applying the ($\otimes{\vdash}$) rules several times and the ($\vdash\multimap$) rule once, the sequent $\vdash \bigotimes \Gamma \multimap C$ is provable in $\mathbf{IMALL}$. By Proposition \ref{SoundnessForIMALL}, the formula $\bigotimes \Gamma \multimap C$ is true in any intuitionistic phase model. By Proposition \ref{CutFreeCompletenessForIMALL}, the sequent $\vdash \bigotimes \Gamma \multimap C$ is provable in $\mathbf{IMALL}^-$. Then, by Lemma \ref{inversionForIMALL} (Inversion in $\mathbf{IMALL}^-$), $\Gamma \vdash C$ is provable in $\mathbf{IMALL}^-$.
\end{proof}
\section{Syntax and Proof System for $\mu$IMALL}\label{sec:SyntaxAndProofSystemForIMALLWithLeastAndGreatestFixedPoints}
\subsection{Syntax}\label{subsec:Syntax}
Let $\mathcal{V}$ be a countably infinite set of variables and $\mathcal{A}$ be a countably infinite set of atomic formulas such that $\mathcal{V} \cap \mathcal{A} = \emptyset$. Syntax $\mathcal{L}_{\mu}$ of $\mu\mathbf{IMALL}$ is defined as follows:
\[
    A\Coloneqq x \mid a \mid \mathbf{1} \mid \top \mid \mathbf{0} \mid A \otimes A \mid A \with A \mid A \oplus A \mid A \multimap A \mid \mu x.A \mid \nu x.A,
\]
where $x$ is an arbitrary element of $\mathcal{V}$, $a$ is an arbitrary element of $\mathcal{A}$, and the fixpoint operators $\mu$ and $\nu$ bind the occurrences of the variable $x$ in $A$. A variable occurrence that is not bound in $A$ is said to be \emph{free} in $A$. We denote by $\mathsf{FV}(A)$ the set of all the variables that occur freely in $A$. In this paper we always impose the following condition on the syntax:
\begin{quote}
    $\mu x.A$ and $\nu x.A$ are  defined if and only if $x$ is positive in $A$,
\end{quote}
where a free occurrence of $x$ is \emph{positive} (or \emph{negative}) in $A$ if, to reach $x$ in $A$, one traverses the antecedent of $\multimap$ an even (or odd) number of times, and a variable $x$ is \emph{positive} (or \emph{negative}) in $A$ if all the free occurrences of $x$ in $A$ are positive (or negative, respectively). Although the positivity condition is not present in Baelde \cite{Baelde2012}, it is necessary for Lemma \ref{Functoriality} (Functoriality) and Lemma \ref{Monotonicity} (Monotonicity) below. Moreover, in this paper, we identify \emph{$\alpha$-equivalent} formulas; that is, we consistently rename bound variables. Thus, we identify $\mu x.A$ and $\mu y.A(y/x)$, $\nu x.A$ and $\nu y.A(y/x)$. 

There are two remarks on our syntax. Firstly, our definition is the same as the notion of positivity of a variable given in Clairambault \cite[Definition 2.1]{Clairambault2013} in that a bound variable can occur on the left-hand side of the implication in a formula \footnote{In Clairambault \cite[Definition 2.1]{Clairambault2013}, the occurrences of bound variables are said to be "strictly positive" if no occurrence of a bound variable appears on the left-hand side of an implication. We do not impose this strict positivity condition.}. For example, $\mu x. ((x \multimap a) \multimap a)$ is a well-formed formula.
Secondly, De et al.~\cite[Definition 19]{DeJafarrahmaniSaurin2022} refer to the elements of syntax as \emph{pre-formulas}, distinguishing them from \emph{formulas}, which are pre-formulas without free variables.
In contrast, we do not adopt this distinction; all elements of $\mathcal{L}_\mu$ are referred to simply as \emph{formulas}.
Since our formulation enables us to derive all the desired theorems, we have chosen this simpler, more natural definition.

To conclude this section, we explicitly define simultaneous substitution as follows.
\begin{definition}
    Let $A \in \mathcal{L}_\mu$, $\vec{x} = x_1, \ldots, x_n$ be an $n$-tuple of distinct variables, $\vec{F} = F_1, \ldots, F_n$ be an $n$-tuple of formulas. We define a simultaneous substitution $A(\vec{F}/\vec{x})$ by induction on the complexity of $A$ as follows:
    \begin{itemize}
        \item if $A$ is an atomic formula, $\mathbf{1}$, $\top$ or $\mathbf{0}$, then $A(\vec{F}/\vec{x}) = A$,
        \item if $A$ is a variable $y$ and $y \not\in \{x_1, \ldots, x_n\}$, then $A(\vec{F}/\vec{x}) = y(\vec{F}/\vec{x}) = y$,
        \item if $A = x_i$ for some $i$ such that $1 \leq i \leq n$, then $A(\vec{F}/\vec{x}) =x_i(\vec{F}/\vec{x}) = F_i$,
        \item $(A_0 \circ A_1)(\vec{F}/\vec{x}) = A_0(\vec{F}/\vec{x}) \circ A_1(\vec{F}/\vec{x})$, where $\circ \in \{\otimes, \with, \oplus, \multimap\}$,
        \item if $A$ is of the form $\eta y.A'$ $(\eta \in \{\mu, \nu\})$, we can assume without loss of generality that $y \neq x_i$ and $y \not\in \mathsf{FV}(F_i)$ for each $F_i$, and $(\eta y.A')(\vec{F}/\vec{x}) = \eta y.A'(\vec{F}/\vec{x})$.
    \end{itemize}
\end{definition}
An ordinary substitution $A(F/x)$ is a special case of a simultaneous substitution.

\begin{remark} From a broader perspective, our aim when studying $\mu\mathbf{IMALL}$ is to provide a type system for functional programming languages, inherently rooted in a strong logical approach.
Seen as a specification logic, $\mu\mathbf{IMALL}$ enables one to type, via a form of Curry-Howard correspondence whose precise study is left for future work, $\lambda$-terms with recursion normalising to words or trees for instance.
In $\mu\mathbf{IMALL}$ one can for instance specify types for terms normalising to finite or infinite words over an alphabet $\{a,b\}$.
If we choose to model words as tensors of atomic formulas corresponding to letters, then a type for terms normalising to $a^n b^\omega$ would be $\phi_1\,=\,(\mu x.((a \otimes x) \oplus \mathbf{1})) \otimes (\nu y.(b \otimes y))$.
Infinite words over that alphabet that do not contain an infinite sequence of $a$s would be modeled by $\phi_2\,=\,\nu x.((\mu y.((a \otimes y) \oplus \mathbf{1})) \otimes b \otimes x)$.
\end{remark}

\subsection{Sequent Calculus}\label{subsec:SequentCalculus}

A \emph{sequent} $\Gamma \vdash C$ is defined as in the same way as $\mathbf{IMALL}$. Sequent calculus $\mu\mathbf{IMALL}$ is obtained by adding the following rules to $\mathbf{IMALL}$:
\[
    \begin{bprooftree}
        \AxiomC{$\Gamma \vdash A(\mu x.A/x)$}
        \RightLabel{$({\vdash}\mu)$}
        \UnaryInfC{$\Gamma \vdash \mu x.A$}
    \end{bprooftree},
    \begin{bprooftree}
        \AxiomC{$A(S/x)\vdash S$}
        \AxiomC{$S, \Gamma \vdash C$}
        \RightLabel{$(\mu{\vdash})$}
        \BinaryInfC{$\mu x.A, \Gamma \vdash C$}
    \end{bprooftree}
\]
\[
    \begin{bprooftree}
        \AxiomC{$\Gamma \vdash S$}
        \AxiomC{$S\vdash A(S/x)$}
        \RightLabel{$({\vdash}\nu)$}
        \BinaryInfC{$\Gamma \vdash \nu x.A$}
    \end{bprooftree},
    \begin{bprooftree}
        \AxiomC{$A(\nu x.A/x), \Gamma \vdash C$}
        \RightLabel{$(\nu{\vdash})$}
        \UnaryInfC{$\nu x.A, \Gamma \vdash C.$}
    \end{bprooftree}
\]
We denote $\mu\mathbf{IMALL}$ without $(Cut)$ by $\mu\mathbf{IMALL}^-$. 
Lemma \ref{inversionForIMALL} (Inversion) extends to the system with fixpoint rule as follows. 
\begin{lemma}[Inversion in $\mu\mathbf{IMALL}^-$]\label{inversion}
    \begin{enumerate}
        \item if $\Gamma \vdash A \multimap B$ is provable in $\mu\mathbf{IMALL}^-$, then $A, \Gamma \vdash B$ is also provable in $\mu\mathbf{IMALL}^-$.
        \item if $A \otimes B, \Gamma \vdash C$ is provable in $\mu\mathbf{IMALL}^-$, then $A, B, \Gamma \vdash C$ is also provable in $\mu\mathbf{IMALL}^-$.
    \end{enumerate}
\end{lemma}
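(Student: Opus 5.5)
Both parts go by induction on the height of a cut-free derivation in $\mu\mathbf{IMALL}^-$, by case analysis on the last rule. For each item I prove a slightly generalized statement so that the induction goes through. For (1): if $\Gamma \vdash A \multimap B$ has a cut-free proof $\pi$, then $A,\Gamma \vdash B$ has one. For (2): if $\pi$ proves $\Gamma' \vdash C$ with $\Gamma' = A\otimes B,\Gamma$, then $A,B,\Gamma \vdash C$ is cut-free provable. The proof of (2) tracks the designated occurrence of $A\otimes B$. For the $\mathbf{IMALL}$ rules, the cases are exactly those of Lemma \ref{inversionForIMALL}, so I only spell out what changes.

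For (1), the last rule of $\pi$ is either a left rule acting on a formula of $\Gamma$, or an introduction of the succedent. The succedent $A\multimap B$ can only have been introduced by $(\mathbf{id})$ or $({\vdash}\multimap)$.
\begin{itemize}
\item For $({\vdash}\multimap)$, the premise is already the desired sequent.
\item For $(\mathbf{id})$, with $\Gamma = [A\multimap B]$, I build $A, A\multimap B\vdash B$ by $(\multimap\vdash)$ from the identities $A\vdash A$ and $B\vdash B$.
\item For left rules whose premises keep the succedent $A\multimap B$ ($(\mathbf{1}{\vdash})$, $(\otimes{\vdash})$, $(\with{\vdash})$, $(\oplus{\vdash})$, $(\nu{\vdash})$), I apply the induction hypothesis to those premises and reapply the rule with $A$ added to the context.
\item For $(\mathbf{0}{\vdash})$, the conclusion is again an axiom instance.
\item For $(\multimap\vdash)$, the succedent lives only in the right premise, so the induction hypothesis applies there only.
\item For $(\mu{\vdash})$, with premises $A'(S/x)\vdash S$ and $S,\Gamma_0\vdash A\multimap B$, I apply the induction hypothesis only to the right premise. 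The left premise does not mention the context and is reused unchanged.
\end{itemize}
The rules $({\vdash}\mu)$ and $({\vdash}\nu)$ cannot end $\pi$, since the succedent is not a fixpoint.

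For (2), the cases split by whether the designated $A\otimes B$ is principal.
\begin{itemize}
\item If it is principal, the last rule is $(\otimes{\vdash})$ and the premise is what we want. It may also be principal via $(\mathbf{id})$, where $A\otimes B\vdash A\otimes B$; then I derive $A,B\vdash A\otimes B$ by $({\vdash}\otimes)$ from identities.
\item If it sits in the context of $(\mathbf{0}{\vdash})$ or $({\vdash}\top)$, the required sequent is again an axiom instance.
\item Otherwise it sits in the context of some rule, and I apply the induction hypothesis to the premise(s) containing it. For the two-premise context-splitting rules $({\vdash}\otimes)$ and $(\multimap\vdash)$, that is exactly one premise.
\item The new fixpoint rules behave as follows. $({\vdash}\mu)$ and $(\nu{\vdash})$ are unary context-preserving rules. $({\vdash}\nu)$ has its context only in the left premise $\Gamma\vdash S$, while $S\vdash A'(S/x)$ is context-free. $(\mu{\vdash})$ has its context only in the right premise $S,\Gamma_0\vdash C$. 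In each case the designated occurrence lies in a premise of smaller height, and the rule is reapplied after the induction hypothesis.
\end{itemize}

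I expect no serious obstacle. The one point to check is that the invariant premises of $(\mu{\vdash})$ and $({\vdash}\nu)$, namely $A'(S/x)\vdash S$ and $S\vdash A'(S/x)$, never carry the designated formula or the context. This holds by the shape of the rules, since those premises have exactly one antecedent formula and arbitrary contexts never enter them. So the induction never needs to transform an invariant proof, and the formula $S$ stays fixed.
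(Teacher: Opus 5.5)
Your proposal is correct and follows the route the paper indicates: induction on the cut-free derivation, extending the $\mathbf{IMALL}$ argument. The only new cases are the fixpoint rules, and your observation handles them: the invariant premises of $(\mu{\vdash})$ and $({\vdash}\nu)$ carry neither the context nor the succedent, so they are reused unchanged. The paper leaves all of these details implicit, and your case analysis fills them in soundly.
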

\subsection{Functoriality}\label{subsec:Functoriality}
This section establishes that 
the following \emph{functoriality} rule is admissible in $\mu\mathbf{IMALL}$:
\[
    \begin{bprooftree}
        \AxiomC{$F \vdash G$}
        \RightLabel{$(func)$}
        \UnaryInfC{$A(F/x)\vdash A(G/x)$}
    \end{bprooftree}
\]
where $x$ is positive in $A$, i.e., all the occurrences of $x$ in $A$ are positive. This rule is introduced to first-order classical linear logic with the fixpoints with no restrictions on variable occurrence in Baelde \cite{Baelde2012}.

% To show this, we define an extended version of substitution.
% \begin{definition}
%     Let $A \in \mathcal{L}_\mu$, $\vec{x} = x_1, \ldots, x_n$ be an $n$-tuple of distinct variables in $A$, $\vec{F} = F_1, \ldots, F_n$ and $\vec{G} = G_1, \ldots, G_n$ be $n$-tuples of formulas. We define an extended simultaneous substitution $\sigma (A, \vec{x}, \vec{F}, \vec{G})$ on the complexity of $A$ as follows:
%     \begin{itemize}
%         \item if $A$ is an atom, $\mathbf{1}$, $\top$ or $\mathbf{0}$, then $\sigma (A, \vec{x}, \vec{F}, \vec{G}) = A$,
%         \item if $A$ is a variable $y$ and $y$'s occurrence is not in $\{x_1, \ldots, x_n\}$, then $\sigma (y, \vec{x}, \vec{F}, \vec{G}) = y$,
%         \item if $A = x_i$ for some $i$ such that $1 \leq i \leq n$, then $\sigma (x_i, \vec{x}, \vec{F}, \vec{G}) = F_i$,
%         \item $\sigma (A_0 \circ A_1, \vec{x}, \vec{F}, \vec{G}) = \sigma (A_0, \vec{x}, \vec{F}, \vec{G}) \circ \sigma (A_1, \vec{x}, \vec{F}, \vec{G})$, where $\circ \in \{\otimes, \with, \oplus\}$,
%         \item $\sigma (A_0 \multimap A_1, \vec{x}, \vec{F}, \vec{G}) = \sigma (A_0, \vec{x}, \vec{G}, \vec{F}) \multimap \sigma (A_1, \vec{x}, \vec{F}, \vec{G})$,
%         \item for $\eta y.A'$, we can assume that $y \neq x_i$ for any $i$ such that $1 \leq i \leq n$, and then $\sigma (\eta y.A', \vec{x}, \vec{F}, \vec{G}) = \eta y.\sigma (A', \vec{x}, \vec{F}, \vec{G})$.
%     \end{itemize}
% \end{definition}
% Through this substitution, if $x_i$ is a positive occurrence in $A$, then it is replaced by $F_i$, otherwise, by $G_i$.
We impose the above positivity restriction in order to handle linear implication $\multimap$. Under this restriction, we cannot take $x \multimap x$ as $A$ in the functoriality rule since the variable $x$ occurs both positively and negatively. In contrast, a formula $y \multimap x$ with $x \neq y$ is permitted, as every occurrence of $x$ in $A$ is positive. Let us now establish the admissibility of the ($func$) rule.
\begin{lemma}[Functoriality]\label{Functoriality}
    Let $A, F, G \in \mathcal{L}_\mu$ and $x \in \mathcal{V}$. If $F \vdash G$ is provable in $\mu\mathbf{IMALL}$ and $x$ is positive in $A$, then $A(F/x)\vdash A(G/x)$ is also provable.
\end{lemma}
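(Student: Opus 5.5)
We prove a stronger statement that handles positive and negative occurrences simultaneously, by induction on the complexity of $A$. Write $\pi(A)$ for a formula whose occurrences of $x$ may be of either polarity. The induction claim is:
\begin{quote}
if $F \vdash G$ is provable, then (a) $A(F/x)\vdash A(G/x)$ is provable whenever $x$ is positive in $A$, and (b) $A(G/x)\vdash A(F/x)$ is provable whenever $x$ is negative in $A$.
\end{quote}
A variable not occurring free counts as both positive and negative. This strengthening is forced by $\multimap$: if $x$ is positive in $A_0\multimap A_1$, then $x$ is negative in $A_0$, so the induction must carry the contravariant case. We may use $(Cut)$ freely, since the lemma is stated for $\mu\mathbf{IMALL}$ and not for $\mu\mathbf{IMALL}^-$.

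\textbf{Base cases.} If $A=x$, then $x$ is positive and the claim is the hypothesis itself. (If $x$ were also required to be negative, $x$ would have to be absent, so this case does not arise for (b).) If $x\notin\mathsf{FV}(A)$, both sides coincide and $(\mathbf{id})$ suffices; this covers atoms, other variables, $\mathbf{1}$, $\top$ and $\mathbf{0}$.

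\textbf{Propositional connectives.} These are routine. For $\otimes$, apply $(\otimes{\vdash})$ to $({\vdash}\otimes)$ using the two induction hypotheses. For $\with$, use $({\vdash}\with)$ and then $(\with{\vdash})$ on each branch. For $\oplus$, the construction is dual. For $A_0\multimap A_1$ with $x$ positive, the induction hypothesis (b) for $A_0$ gives $A_0(G/x)\vdash A_0(F/x)$, and (a) for $A_1$ gives $A_1(F/x)\vdash A_1(G/x)$. Then $(\multimap\vdash)$ yields $A_0(F/x)\multimap A_1(F/x),A_0(G/x)\vdash A_1(G/x)$, and $({\vdash}\multimap)$ finishes. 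The negative case is symmetric.

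\textbf{Fixpoints.} This is the main obstacle. Let $A=\mu y.B$ with $y\neq x$ and $y\notin\mathsf{FV}(F)\cup\mathsf{FV}(G)$, and note that $y$ is positive in $B$. We want $\mu y.B(F/x)\vdash \mu y.B(G/x)$ and apply $(\mu{\vdash})$ with invariant $S=\mu y.B(G/x)$. The right premise $S\vdash S$ is $(\mathbf{id})$. The left premise needs
\[
B(F/x)(S/y)\vdash S .
\]
Since $y\notin\mathsf{FV}(F)$, the left-hand side equals $B(S/y)(F/x)$, where $S$ is closed with respect to $x$ in the relevant sense (it contains no $x$ free once substituted for $G$). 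The induction hypothesis applies to $B(S/y)$, which may not be structurally smaller than $\mu y.B$. To make the induction go through, I would therefore induct on the complexity of $B$ while treating substituted instances uniformly. Concretely, I would prove the claim by induction on $A$ for \emph{all} $F,G$, and apply it to $B$ (smaller than $\mu y.B$) with $x$ as the variable, keeping $y$ free. This gives $B(F/x)\vdash B(G/x)$ with $y$ free. A separate simple substitution lemma then shows that provability is closed under substituting a formula for a free variable $y$ in a whole proof; this holds because the rules are schematic and use capture-avoiding substitution. Substituting $S$ for $y$ gives $B(F/x)(S/y)\vdash B(G/x)(S/y)$.

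Finally, unfolding via $({\vdash}\mu)$ gives $B(G/x)(S/y)\vdash S$, and a $(Cut)$ yields the required premise. The $\nu$ case is dual: use $({\vdash}\nu)$ with $S=\nu y.B(F/x)$ and $\Gamma=S$ closed by $(\mathbf{id})$. The other premise $S\vdash B(G/x)(S/y)$ is obtained by cutting $(\nu{\vdash})$ applied to the identity, namely $S\vdash B(F/x)(S/y)$, against the substituted induction hypothesis. The negative cases (b) for $\mu$ and $\nu$ are identical with $F$ and $G$ swapped.

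The points needing care are the substitution-commutation identities, which require $x\neq y$ and the freshness of $y$, and checking that polarity of $x$ is preserved in $B$ for these cases.
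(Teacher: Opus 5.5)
Your proof is correct, but at the fixpoint step it takes a different route from the paper. The paper generalises the induction to simultaneous substitution. For tuples $\vec x$ (positive) and $\vec y$ (negative) with $F_i\vdash G_i$ provable, it proves $A(\vec F,\vec G/\vec x,\vec y)\vdash A(\vec F,\vec G/\vec y,\vec x)$. In the case $\mu z.B$ it sets $E:=\mu z.B(\vec F,\vec G/\vec y,\vec x)$. It then adds $z$ to the positive tuple, and a fresh dummy $w$ to the negative one, with $E$ substituted on both sides; the new hypothesis is just the axiom $E\vdash E$. The induction hypothesis for $B$ then directly gives $B(\vec F,\vec G/\vec x,\vec y)(E/z)\vdash B(\vec F,\vec G/\vec y,\vec x)(E/z)$, and $({\vdash}\mu)$ followed by $(\mu{\vdash})$ with invariant $E$ finishes. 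So the bound variable is absorbed into the induction hypothesis, and nothing about substituting into derivations is needed.

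You instead keep a single variable and import closure of provability under capture-avoiding substitution. That fact is true, but it is a genuine extra lemma. Its check for $(\mu{\vdash})$ and $({\vdash}\nu)$ rests on $A(T/z)(S/y)=A(S/y)(T(S/y)/z)$ for $z\neq y$, $z\notin\mathsf{FV}(S)$, so you should state and verify it rather than wave at it.

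The paper's construction also never uses $(Cut)$, so it in fact shows that $(func)$ preserves cut-free provability. That is what Lemma~\ref{lem:syntacticMuPhase} needs when it applies $(func)$ inside $\mu\mathbf{IMALL}^-$. Your two cuts are avoidable: apply $({\vdash}\mu)$, respectively $(\nu{\vdash})$, directly to the substituted hypothesis $B(F/x)(S/y)\vdash B(G/x)(S/y)$. Since substitution preserves cut-freeness, your argument then yields the cut-free version as well.

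Finally, drop the aside claiming $B(F/x)(S/y)=B(S/y)(F/x)$. It requires $x\notin\mathsf{FV}(S)$, which fails when $x\in\mathsf{FV}(G)$, although you never actually use it.
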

\begin{proof}
    In what follows, we use ``provable'' to mean  ``provable in $\mu\mathbf{IMALL}$.'' By induction on the complexity of $A$, we show the following more general statement: 
    \begin{quotation}
    \noindent for any $n \in \mathbb{N}$, $n$-tuples of distinct variables $\vec{x} = x_1, \ldots, x_n$ and $\vec{y} = y_1, \ldots, y_n$, $n$-tuples of formulas $\vec{F} = F_1, \ldots, F_n$ and $\vec{G} = G_1, \ldots, G_n$, if $F_1 \vdash G_1, \ldots, F_n \vdash G_n$ are provable, $x_1, \ldots, x_n$ are all positive in $A$, and $y_1, \ldots, y_n$ are all negative in $A$, then $A(\vec{F}, \vec{G}/\vec{x}, \vec{y}) \vdash A(\vec{F}, \vec{G}/\vec{y}, \vec{x})$ is provable. 
    \end{quotation}
    In particular, if $x$ is positive and $y$ does not occur in $A$, which implies that $y$ is trivially negative in $A$, we obtain that $A(F, G/x, y)\vdash A(F, G/y, x) \equiv A(F/x)\vdash A(G/x)$ is provable. 
        When we take $y_2 \multimap x_1$ as an example of $A$ and assume the provability of $F_{1} \vdash G_{1}$ and $F_{2} \vdash G_{2}$, then we obtain the provability of $G_{2} \multimap F_{1} \vdash F_{2} \multimap G_{1}$ by the general statement above (where it is noted that $y_2$ is negative in $y_2 \multimap x_1$).    
    For the base step, we proceed as follows. 
    \begin{itemize}
       \item Let $A$ be an atomic formula, $\mathbf{1}$, $\top$ or $\mathbf{0}$. 
       Let $\vec{F}$,  $\vec{G}$, $\vec{x}$ and $\vec{y}$ satisfy the required conditions.     
       %Fix any $n \in \mathbb{N}$, $n$-tuples of distinct variables $\vec{x} = x_1, \ldots, x_n$ and $\vec{y} = y_1, \ldots, y_n$, $n$-tuples of formulas $\vec{F} = F_1, \ldots, F_n$ and $\vec{G} = G_1, \ldots, G_n$. Assume that $F_1 \vdash G_1, \ldots, F_n \vdash G_n$ are provable, and $x_1, \ldots, x_n$ are positive, $y_1, \ldots, y_n$ are negative in $A$. 
       Since $A(\vec{F}, \vec{G}/\vec{x}, \vec{y}) = A(\vec{F}, \vec{G}/\vec{y}, \vec{x}) = A$, we need to show that $A \vdash A$ is provable, but this is clear.
        \item Let $A = z \in \mathcal{V}$ and let $\vec{F}$,  $\vec{G}$, $\vec{x}$ and $\vec{y}$ satisfy the required conditions.
        %Fix any $n \in \mathbb{N}$, $n$-tuples of distinct variables $\vec{x} = x_1, \ldots, x_n$ and $\vec{y} = y_1, \ldots, y_n$, $n$-tuples of formulas $\vec{F} = F_1, \ldots, F_n$ and $\vec{G} = G_1, \ldots, G_n$. Assume that $F_1 \vdash G_1, \ldots, F_n \vdash G_n$ are provable, and $x_1, \ldots, x_n$ are positive, $y_1, \ldots, y_n$ are negative in $z$. 
        Since the case in which $z \not\in \{x_1, \ldots, x_n, y_1, \ldots, y_n\}$ can be shown similarly to the first case, we assume $z \in \{x_1, \ldots, x_n, y_1, \ldots, y_n\}$. Then, since $z$ is positive in $z$, we have $z = x_i$ for some $i$ such that $1 \leq i \leq n$. Thus, we obtain $z(\vec{F}, \vec{G}/\vec{x}, \vec{y}) = F_i$ and $z(\vec{F}, \vec{G}/\vec{y}, \vec{x}) = G_i$. We need to show that $F_i \vdash G_i$ is provable, but this is one of the assumptions.
    \end{itemize}
    For the inductive step, our argument proceeds as follows.
    \begin{itemize}
        \item Let $A = A_0 \otimes A_1$ and  
        let $\vec{F}$,  $\vec{G}$, $\vec{x}$ and $\vec{y}$ satisfy the required conditions.
        %Fix any $n \in \mathbb{N}$, $n$-tuples of distinct variables $\vec{x} = x_1, \ldots, x_n$ and $\vec{y} = y_1, \ldots, y_n$, $n$-tuples of formulas $\vec{F} = F_1, \ldots, F_n$ and $\vec{G} = G_1, \ldots, G_n$. Assume that $F_1 \vdash G_1, \ldots, F_n \vdash G_n$ are provable, and $x_1, \ldots, x_n$ are positive, $y_1, \ldots, y_n$ are negative in $A$. 
        Since $(A_0 \otimes A_1)(\vec{F}, \vec{G}/\vec{x}, \vec{y}) = A_0(\vec{F}, \vec{G}/\vec{x}, \vec{y}) \otimes A_1(\vec{F}, \vec{G}/\vec{x}, \vec{y})$ and $(A_0 \otimes A_1)(\vec{F}, \vec{G}/\vec{y}, \vec{x}) = A_0(\vec{F}, \vec{G}/\vec{y}, \vec{x}) \otimes A_1(\vec{F}, \vec{G}/\vec{y}, \vec{x})$, we need to show that $A_0(\vec{F}, \vec{G}/\vec{x}, \vec{y}) \otimes A_1(\vec{F}, \vec{G}/\vec{x}, \vec{y}) \vdash A_0(\vec{F}, \vec{G}/\vec{y}, \vec{x}) \otimes A_1(\vec{F}, \vec{G}/\vec{y}, \vec{x})$ is provable. By induction hypothesis, $A_0(\vec{F}, \vec{G}/\vec{x}, \vec{y}) \vdash A_0(\vec{F}, \vec{G}/\vec{y}, \vec{x})$ and $A_1(\vec{F}, \vec{G}/\vec{x}, \vec{y}) \vdash A_1(\vec{F}, \vec{G}/\vec{y}, \vec{x})$ are provable. From these we obtain the following proof:
        \[
            \begin{bprooftree}
                \AxiomC{$A_0(\vec{F}, \vec{G}/\vec{x}, \vec{y}) \vdash A_0(\vec{F}, \vec{G}/\vec{y}, \vec{x})$}
                \AxiomC{$A_1(\vec{F}, \vec{G}/\vec{x}, \vec{y}) \vdash A_1(\vec{F}, \vec{G}/\vec{y}, \vec{x})$}
                \RightLabel{(${\vdash}\otimes$)}
                \BinaryInfC{$A_0(\vec{F}, \vec{G}/\vec{x}, \vec{y}), A_1(\vec{F}, \vec{G}/\vec{x}, \vec{y}) \vdash A_0(\vec{F}, \vec{G}/\vec{y}, \vec{x}) \otimes A_1(\vec{F}, \vec{G}/\vec{y}, \vec{x})$}
                \RightLabel{($\otimes{\vdash}$)}
                \UnaryInfC{$A_0(\vec{F}, \vec{G}/\vec{x}, \vec{y}) \otimes A_1(\vec{F}, \vec{G}/\vec{x}, \vec{y}) \vdash A_0(\vec{F}, \vec{G}/\vec{y}, \vec{x}) \otimes A_1(\vec{F}, \vec{G}/\vec{y}, \vec{x})$.}
            \end{bprooftree}
        \]
        The cases where $A = A_0 \with A_1$ or $A = A_0 \oplus A_1$ can be shown similarly.
        \item Let $A = A_0 \multimap A_1$ and let $\vec{F}$,  $\vec{G}$, $\vec{x}$ and $\vec{y}$ satisfy the required conditions.
        %Fix any $n \in \mathbb{N}$, $n$-tuples of distinct variables $\vec{x} = x_1, \ldots, x_n$ and $\vec{y} = y_1, \ldots, y_n$, $n$-tuples of formulas $\vec{F} = F_1, \ldots, F_n$ and $\vec{G} = G_1, \ldots, G_n$. Assume that $F_1 \vdash G_1, \ldots, F_n \vdash G_n$ are provable, and $x_1, \ldots, x_n$ are positive, $y_1, \ldots, y_n$ are negative in $A$. 
        As $(A_0 \multimap A_1)(\vec{F}, \vec{G}/\vec{x}, \vec{y}) = A_0(\vec{F}, \vec{G}/\vec{x}, \vec{y}) \multimap A_1(\vec{F}, \vec{G}/\vec{x}, \vec{y})$ and $(A_0 \multimap A_1)(\vec{F}, \vec{G}/\vec{y}, \vec{x}) = A_0(\vec{F}, \vec{G}/\vec{y}, \vec{x}) \multimap A_1(\vec{F}, \vec{G}/\vec{y}, \vec{x})$, we need to show that $A_0(\vec{F}, \vec{G}/\vec{x}, \vec{y}) \multimap A_1(\vec{F}, \vec{G}/\vec{x}, \vec{y}) \vdash A_0(\vec{F}, \vec{G}/\vec{y}, \vec{x}) \multimap A_1(\vec{F}, \vec{G}/\vec{y}, \vec{x})$ is provable. 
        Recall that 
        $x_1, \ldots, x_n$ are positive in $A_{0} \multimap A_{1}$ and 
        $y_1, \ldots, y_n$ are negative in $A_{0} \multimap A_{1}$.
        Observe that $x_1, \ldots, x_n$ are positive in $A_1$, $y_1, \ldots, y_n$ are negative in $A_1$.        
        By induction hypothesis, $A_1(\vec{F}, \vec{G}/\vec{x}, \vec{y}) \vdash A_1(\vec{F}, \vec{G}/\vec{y}, \vec{x})$ is provable. 
        Observe also that $y_1, \ldots, y_n$ are positive in $A_0$ and $x_1, \ldots, x_n$ are negative in $A_0$. 
        Again by induction hypothesis, $A_0(\vec{F}, \vec{G}/\vec{y}, \vec{x}) \vdash A_0(\vec{F}, \vec{G}/\vec{x}, \vec{y})$ is provable. 
        From these we obtain the following proof:
        \[
            \begin{bprooftree}
                \AxiomC{$A_0(\vec{F}, \vec{G}/\vec{y}, \vec{x}) \vdash A_0(\vec{F}, \vec{G}/\vec{x}, \vec{y})$}
                \AxiomC{$A_1(\vec{F}, \vec{G}/\vec{x}, \vec{y}) \vdash A_1(\vec{F}, \vec{G}/\vec{y}, \vec{x})$}
                \RightLabel{($\multimap\vdash$)}
                \BinaryInfC{$A_0(\vec{F}, \vec{G}/\vec{x}, \vec{y}) \multimap A_1(\vec{F}, \vec{G}/\vec{x}, \vec{y}), A_0(\vec{F}, \vec{G}/\vec{y}, \vec{x}) \vdash A_1(\vec{F}, \vec{G}/\vec{y}, \vec{x})$}
                \RightLabel{($\vdash\multimap$)}
                \UnaryInfC{$A_0(\vec{F}, \vec{G}/\vec{x}, \vec{y}) \multimap A_1(\vec{F}, \vec{G}/\vec{x}, \vec{y}) \vdash A_0(\vec{F}, \vec{G}/\vec{y}, \vec{x}) \multimap A_1(\vec{F}, \vec{G}/\vec{y}, \vec{x})$.}
            \end{bprooftree}
        \]
        \item Let $A = \mu z.B$ and 
        let $\vec{F}$,  $\vec{G}$, $\vec{x}$ and $\vec{y}$ satisfy the required conditions. We can assume without loss of generality that $z \not\in \{x_1, \ldots, x_n, y_1, \ldots, y_n\}$, $z \not\in \mathsf{FV}(F_i)$ for each $F_i$, and $z \not\in \mathsf{FV}(G_i)$ for each $G_i$. Since $(\mu z.B)(\vec{F}, \vec{G}/\vec{x}, \vec{y}) = \mu z.B(\vec{F}, \vec{G}/\vec{x}, \vec{y})$ and $(\mu z.B)(\vec{F}, \vec{G}/\vec{y}, \vec{x}) = \mu z.B(\vec{F}, \vec{G}/\vec{y}, \vec{x})$, we need to show that $\mu z.B(\vec{F}, \vec{G}/\vec{x}, \vec{y}) \vdash \mu z.B(\vec{F}, \vec{G}/\vec{y}, \vec{x})$ is provable. Take $w \in \mathcal{V}$ such that $w \not\in \mathsf{FV}(B)$. Observe that $w$ is trivially negative in $B$. Write $E := \mu z.B(\vec{F}, \vec{G}/\vec{y}, \vec{x})$.  Since $\mu z.B$ is defined, $z$ is positive in $B$. Moreover, it is clear that $E \vdash E$ is provable. Therefore, by 
        noting that both $B(\vec{F}, E, \vec{G}, E/\vec{x}, z, \vec{y}, w) = B(\vec{F}, \vec{G}/\vec{x}, \vec{y})(E/z)$ and $B(\vec{F}, E, \vec{G}, E/\vec{y}, w, \vec{x}, z) = B(\vec{F}, \vec{G}/\vec{y}, \vec{x})(E/z)$, 
        induction hypothesis tells us that the following sequent   $B(\vec{F}, \vec{G}/\vec{x}, \vec{y})(E/z) \vdash B(\vec{F}, \vec{G}/\vec{y}, \vec{z})(E/z)$ is provable. From this we obtain the following proof:
        \[
            \begin{bprooftree}
                \AxiomC{$B(\vec{F}, \vec{G}/\vec{x}, \vec{y})(E/z) \vdash B(\vec{F}, \vec{G}/\vec{y}, \vec{x})(E/z)$}
                \RightLabel{(${\vdash}\mu$)}
                \UnaryInfC{$B(\vec{F}, \vec{G}/\vec{x}, \vec{y})(E/z) \vdash \mu z.B(\vec{F}, \vec{G}/\vec{y}, \vec{x}) (= E)$}
                \AxiomC{$E \vdash E$}
                \RightLabel{($\mu{\vdash}$)}
                \BinaryInfC{$\mu z.B(\vec{F}, \vec{G}/\vec{x}, \vec{y}) \vdash \mu z.B(\vec{F}, \vec{G}/\vec{y}, \vec{x})$,}
            \end{bprooftree}
        \]
        where we note that $\Gamma$ is an empty multiset $\varnothing$, $S$ is $E$ and $C$ is also $E$ in the the rule $(\mu{\vdash})$. 
        The case where $A = \nu z.B$ can be shown similarly. \qedhere
    \end{itemize}
\end{proof}
\begin{remark}
    In Clairambault \cite[Definitions 2.2 and 2.3]{Clairambault2013}, two types of functors and two types of functoriality rules are defined: positive and negative. 
    Similarly, 
    our proof of the Functoriality Lemma requires handling both positive and negative variables simultaneously. 
    %This necessitates dealing with two types of functoriality within a single setting.
\end{remark}
\section{Phase Semantics for $\mu$IMALL and Soundness}\label{sec:PhaseSemanticsForMuIMALL}
\subsection{$\mu$-Phase Model}

Let $(M, D_M)$ be an intuitionistic phase space. To define a phase model for $\mu\mathbf{IMALL}$, we restrict the codomain of valuations to a specific collection of closed sets, following the approach of De et al.~\cite{DeJafarrahmaniSaurin2022}. This restriction is crucial for establishing cut-free completeness. While De et al. employed double negation as their closure operator, we utilize a closure function $cl$ induced by $D_M$.

\begin{definition}
An intuitionistic $\mu$-phase model is a quadruple $(M, D_M, \mathcal{D}, V)$ where $(M, D_M)$ is an intuitionistic phase space, $\mathcal{D} \subseteq D_M$, and a function $V$ is a \emph{$\mathcal{D}$-valuation} whose domain is $\mathcal{V} \cup \mathcal{A}$ and whose codomain is $\mathcal{D}$. 
\end{definition}

\if0
Let $(M, D_M, V)$ be an intuitionistic phase model. In order to define a phase model for $\mu\mathbf{IMALL}$, we need to make a restriction of the range of a set of closed sets as De et al. \cite{DeJafarrahmaniSaurin2022} did. Therefore, let $\mathcal{D} \subseteq D_M$. A \emph{$\mathcal{D}$-valuation} is a function $V$ whose domain is $\mathcal{V} \cup \mathcal{A}$ and codomain is $\mathcal{D}$.
\fi

\noindent Then we define the \emph{interpretation} 
$\llbracket A \rrbracket^V$ of a formula $A \in \mathcal{L}_\mu$ as in  $\mathbf{IMALL}$ except the following clauses for variables, atoms, and formulas with the fixpoint operators: for $p \in \mathcal{V} \cup \mathcal{A}$ 
we define $\llbracket p\rrbracket^V = V(p)$,
and for $\mu x.A'$ and $\nu x.A'$, we define
\begin{align*}
    \llbracket \mu x.A'\rrbracket^V &= \bigcap\left\{X \in \mathcal{D} \mid \llbracket A'\rrbracket^{V[x \mapsto X]} \subseteq X \right\},& \llbracket \nu x.A'\rrbracket^V &= cl\left(\bigcup\left\{X \in \mathcal{D} \mid X \subseteq \llbracket A'\rrbracket^{V[x \mapsto X]}\right\}\right),
\end{align*}
where $V[x \mapsto X]$ is defined by
\[
    V[x \mapsto X](p) = \begin{cases}
        X & \text{if $p = x$},\\
        V(p) & \text{{otherwise}}.
    \end{cases}
\]
For any formula $A \in \mathcal{L}_\mu$, $\llbracket A\rrbracket^{V} \in D_{M}$ holds. 
This suggests the following definition. 

\begin{definition}
Let $(M, D_M, \mathcal{D}, V)$ be 
an intuitionistic $\mu$-phase model.  
We say that 
a $\mathcal{D}$-valuation $V$ is \emph{admissible} if $\llbracket A \rrbracket^V \in \mathcal{D}$ for all formulas $A \in \mathcal{L}_\mu$. 
An intuitionistic $\mu$-phase model $(M, D_M, \mathcal{D}, V)$ is \emph{admissible} if $V$ is {admissible}.
\end{definition}

Note that when $\mathcal{D} = D_M$, an intuitionistic $\mu$-phase model $(M, D_M, \mathcal{D}, V)$ is always admissible. Similarly to De et al. \cite[Lemma 27]{DeJafarrahmaniSaurin2022}, the following monotonicity lemma can be shown, with a restriction to positive occurrences of variables.
\begin{lemma}[Monotonicity]\label{Monotonicity}
    Let $(M, D_M, \mathcal{D}, V)$ be an admissible intuitionistic $\mu$-phase model, and $X, Y \in \mathcal{D}$. If $x \in \mathcal{V}$ is positive in a formula $A \in \mathcal{L}_\mu$ and  $X \subseteq Y$, then $\llbracket A \rrbracket^{V[x \mapsto X]} \subseteq \llbracket A \rrbracket^{V[x \mapsto Y]}$.
\end{lemma}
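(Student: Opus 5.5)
We prove a strengthened statement by induction on the complexity of $A$, since the $\multimap$ case reverses polarity and the statement for positive variables alone is not closed under subformulas. The strengthened statement reads: for any admissible $V$, any $X, Y \in \mathcal{D}$ with $X \subseteq Y$, and any variable $x$, (a) if $x$ is positive in $A$ then $\llbracket A \rrbracket^{V[x \mapsto X]} \subseteq \llbracket A \rrbracket^{V[x \mapsto Y]}$, and (b) if $x$ is negative in $A$ then $\llbracket A \rrbracket^{V[x \mapsto Y]} \subseteq \llbracket A \rrbracket^{V[x \mapsto X]}$. The induction quantifies over all admissible valuations, which is needed in the fixpoint cases. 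Note that $V[x\mapsto X]$ is again a $\mathcal{D}$-valuation. We must check that it is admissible, or else run the induction over arbitrary $\mathcal{D}$-valuations. The interpretation clauses themselves never use admissibility, so the latter is harmless, and I will state the generalized claim for all $\mathcal{D}$-valuations.

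For the base cases: atoms, constants, and variables $z \neq x$ receive the same value under both valuations. For $A = x$, $x$ is positive, and the claim is exactly $X \subseteq Y$. For $\otimes$, $\oplus$ and $\with$, I use that $\cdot$, $\cup$ and $\cap$ are monotone and that $cl$ is monotone (Proposition \ref{closureProposition}(3)). The polarity of $x$ in each component is the same as in $A$. For $A_0 \multimap A_1$, the polarity of $x$ flips in $A_0$. The IH then gives the reversed inclusion on the antecedent. Since $U \multimap W$ is antitone in $U$ and monotone in $W$, the inclusion on the whole implication follows directly from the definition.

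The fixpoint cases are the main step. For $A = \mu z.B$, assume without loss of generality that $z \neq x$ (if $z = x$, then $x$ is not free and both sides coincide). In case (a), I show that every $Z \in \mathcal{D}$ with $\llbracket B \rrbracket^{V[x\mapsto Y][z\mapsto Z]} \subseteq Z$ also satisfies $\llbracket B \rrbracket^{V[x\mapsto X][z\mapsto Z]} \subseteq Z$. This follows from the IH applied to $B$ with the valuation $V[z\mapsto Z]$, since $x$ remains positive in $B$ and the updates on $x$ and $z$ commute because $z \neq x$. So the family of prefixed points indexed by $Y$ is contained in the family indexed by $X$, and therefore the intersection over the $X$-family is smaller. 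Case (b) is symmetric. For $\nu z.B$, every $Z$ with $Z \subseteq \llbracket B\rrbracket^{V[x\mapsto X][z\mapsto Z]}$ also satisfies $Z \subseteq \llbracket B\rrbracket^{V[x\mapsto Y][z\mapsto Z]}$ by the IH. The union then grows, and monotonicity of $cl$ finishes the case. The only delicate points are the bookkeeping that the IH must be available for arbitrary valuations, hence the generalization, and that the positivity of $z$ in $B$ plays no role here.
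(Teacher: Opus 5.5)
Your proposal is correct and follows essentially the paper's proof. The paper uses the same simultaneous induction on $A$, proving the positive (monotone) and negative (antitone) statements together and handling the $\multimap$ case by flipping polarity in the antecedent. You also spell out the $\mu$ and $\nu$ cases that the paper delegates to De et al. Your choice to run the induction over all $\mathcal{D}$-valuations rather than only admissible ones is a worthwhile refinement: in the $\mu z.B$ and $\nu z.B$ cases the induction hypothesis is needed at $V[z\mapsto Z]$, which need not be admissible, and the paper's admissible-only induction hypothesis does not obviously cover it.
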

\begin{proof}
    Fix any intuitionistic $\mu$-phase space 
    $(M, D_M, \mathcal{D})$. 
    We show by induction on the complexity of $A$ the following: for any admissible $\mathcal{D}$-valuation $V$, $x \in \mathcal{V}$, and $X, Y \in \mathcal{D}$ with $X \subseteq Y$, the following hold: (i) if $x$ is positive in $A$, then $\llbracket A \rrbracket^{V[x \mapsto X]} \subseteq \llbracket A \rrbracket^{V[x \mapsto Y]}$ and (ii) if $x$ is negative in $A$, then $\llbracket A \rrbracket^{V[x \mapsto Y]} \subseteq \llbracket A \rrbracket^{V[x \mapsto X]}$. 
    We show (i) only for the cases where $A = A_0 \multimap A_1$. The remaining cases for statement (i) can be proved similarly to De et al.~\cite[Lemma 27]{DeJafarrahmaniSaurin2022}.
    Furthermore, statement (ii) for each case can be shown by replacing ``positive'' with ``negative'' and reversing the inclusion ``$\subseteq$'' in the proof of (i). Let $A = A_0 \multimap A_1$. Fix any admissible $\mathcal{D}$-valuation $V$, $x \in \mathcal{V}$, and $X, Y \in \mathcal{D}$ such that $X \subseteq Y$. Suppose that $x$ is positive in $A_0 \multimap A_1$. We show the following:
    \begin{align*}
        &\{b \in M \mid \text{for all $a \in \llbracket A_0 \rrbracket^{V[x \mapsto X]}$, $ab \in \llbracket A_1 \rrbracket^{V[x \mapsto X]}$}\}\\
        \subseteq ~&\{b \in M \mid \text{for all $a \in \llbracket A_0 \rrbracket^{V[x \mapsto Y]}$, $ab \in \llbracket A_1 \rrbracket^{V[x \mapsto Y]}$}\}.
    \end{align*}
    Fix any $b \in M$ such that $ab \in \llbracket A_1 \rrbracket^{V[x \mapsto X]}$ for all $a \in \llbracket A_0 \rrbracket^{V[x \mapsto X]}$. 
    Furthermore, fix any $a' \in \llbracket A_0 \rrbracket^{V[x \mapsto Y]}$. We show that $a'b \in \llbracket A_1 \rrbracket^{V[x \mapsto Y]}$. Observe that $x$ is negative in $A_0$ and positive in $A_1$. Therefore, we obtain $a' \in \llbracket A_0 \rrbracket^{V[x \mapsto Y]} \subseteq \llbracket A_0 \rrbracket^{V[x \mapsto X]}$ by induction hypothesis for (ii), and so $a'b \in \llbracket A_1 \rrbracket^{V[x \mapsto X]} \subseteq \llbracket A_1 \rrbracket^{V[x \mapsto Y]}$ by induction hypothesis for (i).
\end{proof}

\subsection{Soundness}
\begin{lemma}[Soundness for $\mu\mathbf{IMALL}$]\label{SoundnessАorMuIMALL}
    If a sequent $A_1, \ldots, A_n \vdash C$ is provable in $\mu\mathbf{IMALL}$, then $\llbracket A_1 \rrbracket^V \cdots \llbracket A_n \rrbracket^V \subseteq \llbracket C \rrbracket^V$ holds for any admissible intuitionistic $\mu$-phase model $(M, D_M, \mathcal{D}, V)$, where the empty product of the monoid operation is defined as $\{1\}$. In particular, if $\vdash A$ is provable in $\mu\mathbf{IMALL}$, then $A$ is true in any admissible intuitionistic $\mu$-phase model.
\end{lemma}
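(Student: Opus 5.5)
We proceed by induction on the height of a proof in $\mu\mathbf{IMALL}$, fixing an arbitrary admissible intuitionistic $\mu$-phase model $(M, D_M, \mathcal{D}, V)$ and showing that each rule preserves the property ``$\llbracket \Gamma \rrbracket^V \subseteq \llbracket C \rrbracket^V$'', where $\llbracket \Gamma \rrbracket^V$ abbreviates the product of the interpretations of the antecedent formulas. For the $\mathbf{IMALL}$ rules we argue exactly as in Proposition \ref{SoundnessForIMALL}, using only that every $\llbracket A \rrbracket^V$ lies in $D_M$ together with the closure properties of Proposition \ref{closureProposition}. For instance, for $(\otimes{\vdash})$ we use $cl(XY)Z \subseteq cl(XYZ) \subseteq \llbracket C\rrbracket^V$, which holds because $\llbracket C\rrbracket^V$ is closed. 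For $(\multimap{\vdash})$ and $(\vdash\multimap)$ we unfold the definition of $X\multimap Y$, and for $(Cut)$ we use monotonicity of the product.

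Before treating the fixpoint rules we need a substitution lemma: $\llbracket A(F/x)\rrbracket^V = \llbracket A\rrbracket^{V[x\mapsto \llbracket F\rrbracket^V]}$. It is proved by induction on $A$, with the usual capture-avoidance assumption for binders; the $\mu$ and $\nu$ clauses go through because both sides are defined by the same intersection or union over $\mathcal{D}$. Admissibility is essential here, since it guarantees $\llbracket F\rrbracket^V\in\mathcal{D}$, so that $V[x\mapsto \llbracket F\rrbracket^V]$ is again a $\mathcal{D}$-valuation. We also need that this updated valuation is again admissible. This follows because every formula interpreted under it equals $\llbracket A(F/x)\rrbracket^V$ by the substitution lemma, and that set lies in $\mathcal{D}$ by admissibility of $V$.

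For the fixpoint rules, write $\Phi(X) = \llbracket A\rrbracket^{V[x\mapsto X]}$ for $X\in\mathcal{D}$. This map sends $\mathcal{D}$ into $\mathcal{D}$ by the remark above, and it is monotone by Lemma \ref{Monotonicity}, since $x$ is positive in $A$. Let $L = \llbracket \mu x.A\rrbracket^V$, which lies in $\mathcal{D}$ by admissibility. The rules are then handled as follows.

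\begin{itemize}
\item $(\mu{\vdash})$: from the first premise, $\Phi(\llbracket S\rrbracket^V) = \llbracket A(S/x)\rrbracket^V \subseteq \llbracket S\rrbracket^V$. Since $\llbracket S\rrbracket^V\in\mathcal{D}$, it is one of the sets in the intersection defining $L$, so $L\subseteq \llbracket S\rrbracket^V$. Combining this with the second premise gives $L\cdot\llbracket\Gamma\rrbracket^V \subseteq \llbracket S\rrbracket^V\llbracket\Gamma\rrbracket^V\subseteq\llbracket C\rrbracket^V$.
\item $({\vdash}\mu)$: we must show the unfolding inclusion $\Phi(L)\subseteq L$; the rule then follows from the induction hypothesis and the substitution lemma. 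This is the Knaster--Tarski argument restricted to $\mathcal{D}$. For any $X\in\mathcal{D}$ with $\Phi(X)\subseteq X$, we have $L\subseteq X$, hence $\Phi(L)\subseteq\Phi(X)\subseteq X$ by monotonicity. Taking the intersection over all such $X$ gives $\Phi(L)\subseteq L$.
\item $({\vdash}\nu)$: from the second premise, $\llbracket S\rrbracket^V\subseteq \Phi(\llbracket S\rrbracket^V)$. Since $\llbracket S\rrbracket^V\in\mathcal{D}$, it is one of the sets in the union defining $G:=\llbracket\nu x.A\rrbracket^V$, so $\llbracket S\rrbracket^V\subseteq G$, which combined with the first premise handles the rule.
\item $(\nu{\vdash})$: we need $G\subseteq\Phi(G)$. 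For each $X\in\mathcal{D}$ with $X\subseteq\Phi(X)$, we have $X\subseteq G$; here $G\in\mathcal{D}$ by admissibility and $X\subseteq G$ by extensivity of $cl$. Hence $X\subseteq \Phi(X)\subseteq\Phi(G)$. So the union is contained in $\Phi(G)$, and since $\Phi(G)\in D_M$ is closed, $G = cl(\bigcup\cdots)\subseteq\Phi(G)$. The rule then follows from the induction hypothesis and the substitution lemma.
\end{itemize}

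The main obstacle is the bookkeeping around the substitution lemma and admissibility, because all fixpoint reasoning happens inside $\mathcal{D}$ rather than all of $D_M$. Every application of Lemma \ref{Monotonicity} and every use of the defining intersection or union requires the relevant sets (namely $\llbracket S\rrbracket^V$, $L$, $G$) to be in $\mathcal{D}$ and the updated valuations to be admissible. Both facts are supplied by admissibility of $V$ combined with the substitution lemma, so we will prove that lemma carefully first. Finally, the statement ``in particular'' follows because the empty product is $\{1\}\subseteq\llbracket A\rrbracket^V$.
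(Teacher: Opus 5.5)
Your proof is correct and follows the paper's argument essentially step for step. It is an induction on the derivation: $(\mu{\vdash})$ and $({\vdash}\nu)$ are handled by noting that $\llbracket S\rrbracket^V\in\mathcal{D}$ is a pre-/post-fixed point, and $({\vdash}\mu)$, $(\nu{\vdash})$ by the Knaster--Tarski argument restricted to $\mathcal{D}$ via Lemma~\ref{Monotonicity}. Your explicit substitution lemma is exactly the identity $\llbracket B(F/x)\rrbracket^V=\llbracket B\rrbracket^{V[x\mapsto\llbracket F\rrbracket^V]}$, which the paper uses without comment.

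One caveat concerns your claim that $\Phi$ maps $\mathcal{D}$ into $\mathcal{D}$. It does not follow from your remark, which only covers $X=\llbracket F\rrbracket^V$, and it can fail. For example, take $M=(\mathbb{N},+)$, $D_M=\wp(\mathbb{N})$, $\mathcal{D}=\{\emptyset,\{0\},\{1\},\mathbb{N}\}$ and $V$ constantly $\{0\}$; this model is admissible, yet $\llbracket x\otimes x\rrbracket^{V[x\mapsto\{1\}]}=\{2\}\notin\mathcal{D}$. Your argument never actually uses that claim, since Lemma~\ref{Monotonicity} only needs $V$ admissible and $X,Y\in\mathcal{D}$. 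Relatedly, state your substitution lemma for arbitrary $\mathcal{D}$-valuations $W$ with $\llbracket F\rrbracket^W\in\mathcal{D}$: its binder case applies the induction hypothesis at $W[y\mapsto Y]$, which need not be admissible.
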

\begin{proof}
Fix an arbitrary admissible intuitionistic $\mu$-phase model $(M, D_M, \mathcal{D}, V)$. We shall prove that $\llbracket A_1 \rrbracket^V \cdots \llbracket A_n \rrbracket^V \subseteq \llbracket C \rrbracket^V$ by induction on the proof of $A_1, \ldots, A_n \vdash C$. The proof proceeds by case analysis on the last rule applied; here, we only present the cases for the fixpoint operators.
    \begin{itemize}
        \item Let the last applied rule be
        \[
            \begin{bprooftree}
                \AxiomC{$A_1, \ldots, A_n \vdash B(\mu x.B/x)$}
                \RightLabel{(${\vdash}\mu$)}
                \UnaryInfC{$A_1, \ldots, A_n \vdash \mu x.B.$}
            \end{bprooftree}
        \]
        We show that $\llbracket A_1 \rrbracket^V \cdots \llbracket A_n \rrbracket^V \subseteq \llbracket \mu x.B \rrbracket^V$. Since $\llbracket A_1 \rrbracket^V \cdots \llbracket A_n \rrbracket^V \subseteq \llbracket B(\mu x.B/x) \rrbracket^V$ by induction hypothesis, it suffices to show $\llbracket B(\mu x.B/x) \rrbracket^V \subseteq \llbracket \mu x.B \rrbracket^V$. Note that $\llbracket B(\mu x.B/x) \rrbracket^V = \llbracket B \rrbracket^{V[x \mapsto \llbracket \mu x.B \rrbracket^V]}$ since $\llbracket \mu x.B \rrbracket^V \in \mathcal{D}$ and so $V[x \mapsto \llbracket \mu x.B \rrbracket^V]$ is a $\mathcal{D}$-valuation. Recall that $\llbracket \mu x.B \rrbracket^V = \bigcap \{X \in \mathcal{D} \mid \llbracket B \rrbracket^{V[x \mapsto X]} \subseteq X\}$. Hence, for our goal, let us fix any $X \in \mathcal{D}$ such that $\llbracket B \rrbracket^{V[x \mapsto X]} \subseteq X$. We show that $\llbracket B \rrbracket^{V[x \mapsto \llbracket \mu x.B \rrbracket^V]} \subseteq X$. By $\llbracket B \rrbracket^{V[x \mapsto X]} \subseteq X$, it suffices to show $\llbracket B \rrbracket^{V[x \mapsto \llbracket \mu x.B \rrbracket^V]} \subseteq \llbracket B \rrbracket^{V[x \mapsto X]}$. But this holds by Lemma \ref{Monotonicity} (Monotonicity) because $x$ is positive in $B$ and $\llbracket \mu x.B \rrbracket^V \subseteq X$.
        \item Let the last applied rule be
        \[
            \begin{bprooftree}
                \AxiomC{$B(S/x) \vdash S$}
                \AxiomC{$S, A_1, \ldots, A_n \vdash C$}
                \RightLabel{($\mu{\vdash}$)}
                \BinaryInfC{$\mu x.B, A_1, \ldots, A_n \vdash C.$}
            \end{bprooftree}
        \]
        We show that $\llbracket \mu x.B \rrbracket^V \llbracket A_1 \rrbracket^V \cdots \llbracket A_n \rrbracket^V \subseteq \llbracket C \rrbracket^V$. Since $\llbracket S \rrbracket^V \llbracket A_1 \rrbracket^V \cdots \llbracket A_n \rrbracket^V \subseteq \llbracket C \rrbracket^V$ by induction hypothesis, it suffices to show that $\llbracket \mu x.B \rrbracket^V \subseteq \llbracket S \rrbracket^V$. To show this, it is enough to show that $\llbracket B \rrbracket^{V[x \mapsto \llbracket S \rrbracket^V]} \subseteq \llbracket S \rrbracket^V$ since $\llbracket S \rrbracket^V \in \mathcal{D}$. This holds as induction hypothesis we obtain $\llbracket B \rrbracket^{V[x \mapsto \llbracket S \rrbracket^V]} = \llbracket B(S/x) \rrbracket^V \subseteq \llbracket S \rrbracket^V$.
        \item Let the last applied rule be
        \[
            \begin{bprooftree}
                \AxiomC{$A_1, \ldots, A_n \vdash S$}
                \AxiomC{$S \vdash B(S/x)$}
                \RightLabel{(${\vdash}\nu$)}
                \BinaryInfC{$A_1, \ldots, A_n \vdash \nu x.B.$}
            \end{bprooftree}
        \]
        We show $\llbracket A_1 \rrbracket^V \cdots \llbracket A_n \rrbracket^V \subseteq \llbracket \nu x.B \rrbracket^V$. Since $\llbracket A_1 \rrbracket^V \cdots \llbracket A_n \rrbracket^V \subseteq \llbracket S \rrbracket^V$ by induction hypothesis, it suffices to show that $\llbracket S \rrbracket^V \subseteq \llbracket \nu x.B \rrbracket^V$. Then, recall that $\bigcup \{X \in \mathcal{D} \mid X \subseteq \llbracket B \rrbracket^{V[x \mapsto X]}\} \subseteq cl(\bigcup \{X \in \mathcal{D} \mid X \subseteq \llbracket B \rrbracket^{V[x \mapsto X]}\}) = \llbracket \nu x.B \rrbracket^V$ by Proposition \ref{closureProposition}. 
        It is noted that $\llbracket S \rrbracket^V \in\mathcal{D}$. 
        Hence, in order to show that $\llbracket S \rrbracket^V \subseteq \llbracket \nu x.B \rrbracket^V$, we prove that $\llbracket S \rrbracket^V \subseteq \llbracket B \rrbracket^{V[x \mapsto \llbracket S \rrbracket^V]}$, which holds by our induction hypothesis $\llbracket S \rrbracket^V \subseteq \llbracket B(S/x) \rrbracket^V = \llbracket B \rrbracket^{V[x \mapsto \llbracket S \rrbracket^V]}$.
        \item Let the last applied rule be
        \[
            \begin{bprooftree}
                \AxiomC{$B(\nu x.B/x), A_1, \ldots, A_n \vdash C$}
                \RightLabel{($\nu{\vdash}$)}
                \UnaryInfC{$\nu x.B, A_1, \ldots, A_n \vdash C.$}
            \end{bprooftree}
        \]
        We show that $\llbracket \nu x.B \rrbracket^V \llbracket A_1 \rrbracket^V \cdots \llbracket A_n \rrbracket^V \subseteq \llbracket C \rrbracket^V$. It suffices to show  $\llbracket \nu x.B \rrbracket^V \subseteq \llbracket B(\nu x.B/x) \rrbracket^V$ since $\llbracket B(\nu x.B/x) \rrbracket^V \llbracket A_1 \rrbracket^V \cdots \llbracket A_n \rrbracket^V \subseteq \llbracket C \rrbracket^V$ by induction hypothesis. To show it, it is enough to show  $\llbracket \nu x.B \rrbracket^V$ = $\bigcup \{X \in \mathcal{D} \mid X \subseteq \llbracket B \rrbracket^{V[x \mapsto X]}\} \subseteq \llbracket B \rrbracket^{V[x \mapsto \llbracket \nu x.B \rrbracket^V]}$. Fix any $X \in \mathcal{D}$ such that $X \subseteq \llbracket B \rrbracket^{V[x \mapsto X]}$. It is noted that $X \subseteq \llbracket \nu x.B \rrbracket^V$. 
        We show $X \subseteq \llbracket B \rrbracket^{V[x \mapsto \llbracket \nu x.B \rrbracket^V]}$. By $X \subseteq \llbracket B \rrbracket^{V[x \mapsto X]}$, it suffices to show  $\llbracket B \rrbracket^{V[x \mapsto X]} \subseteq \llbracket B \rrbracket^{V[x \mapsto \llbracket \nu x.B \rrbracket^V]}$. But this holds by Lemma \ref{Monotonicity} (Monotonicity) and $X \subseteq \llbracket \nu x.B \rrbracket^V$. \qedhere
    \end{itemize}
\end{proof}
\section{Completeness and Semantic Cut-elimination}\label{sec:CompletenessAndSemanticCutElimination}
\subsection{Syntactic Phase Model}\label{subsec:SyntacticModel}

We define the \emph{syntactic intuitionistic $\mu$-phase model} $(M, D_M, \mathcal{D}, V)$ as the syntactic intuitionistic phase model for $\mathbf{IMALL}$ extended with a collection $\mathcal{D}$ of sets as follows.  

\begin{definition}
\label{dfn:syntacticMuPhaseModel}
    For a formula $C$, we define 
    \[
    \mathsf{Pr}_{cf}(C) = \{\Gamma \mid \text{$\Gamma \vdash C$ is provable without $(Cut)$ in $\mu\mathbf{IMALL}$}\}.
    \]
    Define the syntactic intuitionistic $\mu$-phase model $(M, D_M, \mathcal{D}, V)$ as follows:
    \begin{itemize}
        \item The base set $M$ of the monoid is the set of all finite multisets of formulas.
        \item The monoid operation is the union  $\cup$ of multisets.
        \item The neutral element of the monoid is the empty multiset $\emptyset$.
        \item $X \in D_M$ iff $X = \bigcap\{\mathsf{Pr}_{cf}(C) \mid C \in \mathbb{F}\}$ for some set $\mathbb{F}$ of formulas.
        \item $V(p) := \mathsf{Pr}_{cf}(p)$.
        \item $\mathcal{D}$ is defined as follows:
\[
    \mathcal{D} = \{ X \in D_M \mid [A] \in X \text{ and } X \subseteq \mathsf{Pr}_{cf}(A)
    \text{ for some formula $A \in \mathcal{L}_\mu$}
    \},
\]
where ``$[A]$'' denotes the singleton multiset of $A$.
    \end{itemize}
\end{definition}
\noindent The definition of $\mathcal{D}$ above is similar to that of De et al.~\cite[Definition 33]{DeJafarrahmaniSaurin2022}, who refer to it as the set of \emph{reducibility candidates}. This terminology is inspired by the work of Okada~\cite{Okada2002} and the normalization proofs for various $\lambda$-calculus systems by Tait and Girard~\cite{tait67,girard72}. However, in our definition, $M$, $D_M$ and $\mathcal{D}$ are sets of multisets of not only closed formulas, but also open formulas, i.e., formulas with free variables. This differs from De et al. \cite{DeJafarrahmaniSaurin2022}.

\begin{lemma}
\label{lem:welldefinedSyntacticmu}
The syntactic intuitionistic $\mu$-phase model $(M, D_M, \mathcal{D}, V)$ is well-defined. 
\end{lemma}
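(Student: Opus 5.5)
To establish that the syntactic intuitionistic $\mu$-phase model of Definition \ref{dfn:syntacticMuPhaseModel} is well-defined, I would check three things in order: (a) $(M,\cup,\emptyset)$ is a commutative monoid; (b) $D_M$ is a set of closed sets in the sense of Definition \ref{theSetOfClosedSets}; and (c) $V$ is a $\mathcal{D}$-valuation, i.e.\ $\mathcal{D}\subseteq D_M$ and $V(p)\in\mathcal{D}$ for every $p\in\mathcal{V}\cup\mathcal{A}$. Item (a) is immediate, since multiset union is associative and commutative with unit $\emptyset$. The inclusion $\mathcal{D}\subseteq D_M$ holds by the very definition of $\mathcal{D}$.

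For (b), closure under arbitrary intersections holds because an intersection of sets of the form $\bigcap\{\mathsf{Pr}_{cf}(C)\mid C\in\mathbb{F}_i\}$ is again of that form, with index set $\bigcup_i\mathbb{F}_i$. The empty intersection is $M=\bigcap\emptyset$, which is allowed with $\mathbb{F}=\emptyset$. For closure under $\multimap$, fix $X\subseteq M$ and $Y=\bigcap\{\mathsf{Pr}_{cf}(C)\mid C\in\mathbb{F}\}$. Following the $\mathbf{IMALL}$ argument, I would prove
\[
X\multimap Y=\bigcap\{\mathsf{Pr}_{cf}(\textstyle\bigotimes\Gamma\multimap C)\mid \Gamma\in X,\ C\in\mathbb{F}\}.
\]
For the inclusion from left to right, take $\Delta\in X\multimap Y$ and $\Gamma\in X$. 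Then $\Gamma,\Delta\vdash C$ is cut-free provable. Applying $(\mathbf{1}{\vdash})$ (needed for the unit inside $\bigotimes\Gamma$) together with repeated $(\otimes{\vdash})$, and then $({\vdash}\multimap)$, yields $\Delta\vdash\bigotimes\Gamma\multimap C$ cut-free. For the converse inclusion, Lemma \ref{inversion}(1) turns $\Delta\vdash\bigotimes\Gamma\multimap C$ into $\bigotimes\Gamma,\Delta\vdash C$. Repeated use of Lemma \ref{inversion}(2) then unpacks the tensors.

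The one remaining wrinkle is the leading $\mathbf{1}$, since $\bigotimes\Gamma$ begins with $\mathbf{1}\otimes\cdots$. I would need an inversion for $(\mathbf{1}{\vdash})$: if $\mathbf{1},\Gamma\vdash C$ is cut-free provable, then so is $\Gamma\vdash C$. This follows by a routine induction on cut-free proofs of the same kind as Lemma \ref{inversion}, with the new fixpoint rules handled by commuting them past the $\mathbf{1}$. Alternatively, I could redefine $\bigotimes$ so that it avoids the unit for nonempty $\Gamma$, and use $\bigotimes\emptyset=\mathbf{1}$ only in the empty case, where the same small inversion is still needed. This step is the only genuine obstacle; everything else is bookkeeping.

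For (c), fix $p$ and take $A:=p$ as the witness formula in the definition of $\mathcal{D}$. Then $V(p)=\mathsf{Pr}_{cf}(p)$ lies in $D_M$, using $\mathbb{F}=\{p\}$. Moreover $[p]\in\mathsf{Pr}_{cf}(p)$ by $(\mathbf{id})$, and $V(p)\subseteq\mathsf{Pr}_{cf}(p)$ holds trivially. Hence $V(p)\in\mathcal{D}$, and this works for variables and atoms alike. This completes well-definedness; admissibility of $V$ is a separate matter and is not claimed here.
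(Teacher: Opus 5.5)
Your proposal is correct and follows the paper's proof. The closure of $D_M$ under intersections and $\multimap$ is carried over from the $\mathbf{IMALL}$ argument via Lemma~\ref{inversion}, and $V(p)\in\mathcal{D}$ is witnessed by $A:=p$ with $\mathbb{F}=\{p\}$ and $(\mathbf{id})$. Your point that the leading $\mathbf{1}$ in $\bigotimes\Gamma$ also needs an inversion for $(\mathbf{1}{\vdash})$ is a genuine detail the paper leaves implicit. It does hold by the routine induction you describe, where the only case needing care is the axiom $\mathbf{1}\vdash\mathbf{1}$, replaced by $({\vdash}\mathbf{1})$.
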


\begin{proof}
Well-definedness is established similarly to the case of $\mathbf{IMALL}$ in Section~\ref{sec:Preliminaries}, utilizing Lemma~\ref{inversion} (Inversion). 
To show that $V$ is a $\mathcal{D}$-valuation, it suffices to verify that $\mathsf{Pr}_{cf}(p) \in D_M$ and $[p] \in \mathsf{Pr}_{cf}(p)$ for any $p \in \mathcal{V} \cup \mathcal{A}$. 
The former follows immediately from the definition of $D_M$ by taking $\mathbb{F} = \{p\}$. 
The latter is a consequence of the cut-free provability of the identity sequent $p \vdash p$ in $\mu\mathbf{IMALL}$.
\end{proof}

\subsection{Cut-free Completeness}\label{subsec:CutFreeCompleteness}

The following is a key lemma for establishing the cut-free completeness of $\mu\mathbf{IMALL}^{-}$, generalizing Proposition~\ref{OkadaIMALL} for $\mathbf{IMALL}$.

\begin{lemma}
\label{lem:syntacticMuPhase}
    Let $(M, D_M, \mathcal{D}, V)$ 
    be the syntactic intuitionistic $\mu$-phase model. 
    Let $A$ be a formula, $\vec{B} = B_1, \ldots, B_n$ an $n$-tuple of formulas, $\vec{x} = x_1, \ldots, x_n$ an $n$-tuple of distinct variables, and $\vec{X} = X_1, \ldots, X_n$ an $n$-tuple of closed sets such that $[B_i] \in X_i \subseteq \mathsf{Pr}_{cf}(B_i)$ for all $i$ with $1 \leq i \leq n$. 
Then the following holds: 
\[
[A(\vec{B}/\vec{x})] \in \llbracket A \rrbracket^{V[\vec{x}\mapsto\vec{X}]} \subseteq \mathsf{Pr}_{cf}(A(\vec{B}/\vec{x})),
\] where $V[\vec{x}\mapsto\vec{X}]$ denotes $V[x_1 \mapsto X_1]\cdots[x_n \mapsto X_n]$.
\end{lemma}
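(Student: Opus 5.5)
I will prove the statement by induction on the complexity of $A$, for all $\vec{B},\vec{x},\vec{X}$ at once. The proof establishes (i) $[A(\vec{B}/\vec{x})] \in \llbracket A \rrbracket^{V[\vec{x}\mapsto\vec{X}]}$ and (ii) $\llbracket A \rrbracket^{V[\vec{x}\mapsto\vec{X}]} \subseteq \mathsf{Pr}_{cf}(A(\vec{B}/\vec{x}))$ simultaneously, as in Proposition~\ref{OkadaIMALL}.

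\textbf{Base and propositional cases.} For a variable $x_i$ the claim is exactly the hypothesis $[B_i]\in X_i\subseteq\mathsf{Pr}_{cf}(B_i)$. For other variables and atoms it is Lemma~\ref{lem:welldefinedSyntacticmu}. The constants and the connectives $\otimes,\with,\oplus,\multimap$ are handled verbatim as in Proposition~\ref{OkadaIMALL}, with $A(\vec{B}/\vec{x})$ in place of $A$. Substitution commutes with the connectives, and the induction hypotheses are applied with the same $\vec{B},\vec{X}$. For the closure cases ($\otimes,\oplus,\mathbf{0},\mathbf{1}$) I will use that $\mathsf{Pr}_{cf}(C)$ is closed, so it suffices to show that the generating set lies in $\mathsf{Pr}_{cf}(C)$. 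For (i) in the $\otimes$ case I will use that every closed set is an intersection of sets $\mathsf{Pr}_{cf}(D)$, together with inversion-like reasoning via the rules $(\otimes{\vdash})$ and $(\oplus{\vdash})$, exactly as Okada does. A consequence needed later is that $\llbracket A\rrbracket^{V[\vec{x}\mapsto\vec{X}]}\in\mathcal{D}$ whenever all $X_i\in\mathcal{D}$, witnessed by the formula $A(\vec{B}/\vec{x})$.

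\textbf{The $\mu$ case.} Let $A=\mu z.A'$ with $z$ fresh, write $W=V[\vec{x}\mapsto\vec{X}]$, and let $E=(\mu z.A')(\vec{B}/\vec{x})=\mu z.A'(\vec{B}/\vec{x})$.

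For (ii), I take $X:=\llbracket A'\rrbracket^{W[z\mapsto \mathsf{Pr}_{cf}(E)]}$ and aim to show it is a prefixed point in $\mathcal{D}$ that lies inside $\mathsf{Pr}_{cf}(E)$. The set $\mathsf{Pr}_{cf}(E)$ is in $\mathcal{D}$, witnessed by $E$, since $E\vdash E$. So the induction hypothesis applies to $A'$ with the extended tuple $(\vec{B},E)$ and $(\vec{X},\mathsf{Pr}_{cf}(E))$. It gives $[A'(\vec{B}/\vec{x})(E/z)]\in X\subseteq \mathsf{Pr}_{cf}(A'(\vec{B}/\vec{x})(E/z))$, so $X\in\mathcal{D}$. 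Applying $({\vdash}\mu)$ yields $X\subseteq\mathsf{Pr}_{cf}(E)$. Monotonicity (Lemma~\ref{Monotonicity}) then gives $\llbracket A'\rrbracket^{W[z\mapsto X]}\subseteq X$. Monotonicity requires admissibility, which I obtain from the $\mathcal{D}$-membership remark above; alternatively, I would prove this lemma and admissibility in one induction. Hence $\llbracket\mu z.A'\rrbracket^{W}\subseteq X\subseteq\mathsf{Pr}_{cf}(E)$.

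For (i), I must show $[E]\in Y$ for every $Y\in\mathcal{D}$ with $\llbracket A'\rrbracket^{W[z\mapsto Y]}\subseteq Y$. Such a $Y$ has a witness $S$ with $[S]\in Y\subseteq\mathsf{Pr}_{cf}(S)$. The induction hypothesis with $(\vec{B},S)$ and $(\vec{X},Y)$ gives $[A'(\vec{B}/\vec{x})(S/z)]\in\llbracket A'\rrbracket^{W[z\mapsto Y]}\subseteq Y\subseteq\mathsf{Pr}_{cf}(S)$, that is, $A'(\vec{B}/\vec{x})(S/z)\vdash S$ is cut-free provable. Writing $Y=\bigcap_{C\in\mathbb{F}}\mathsf{Pr}_{cf}(C)$, the fact $[S]\in Y$ gives $S\vdash C$ for each $C\in\mathbb{F}$. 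Then $(\mu{\vdash})$ yields $E\vdash C$ for each such $C$, so $[E]\in Y$.

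\textbf{The $\nu$ case.} This is dual. For (i), the set $\mathsf{Pr}_{cf}(E)$ already contains $[E]$, so I use the candidate $X:=\{\Gamma \mid \Gamma \in \llbracket A'\rrbracket^{W[z\mapsto \mathsf{Pr}_{cf}(E)]}\}$, or more simply show that $Z:=cl(\{[E]\})$, which lies in $\mathcal{D}$ with witness $E$, is a postfixed point. For (ii), any postfixed $Y\in\mathcal{D}$ with witness $S$ gives $S\vdash A'(\vec{B}/\vec{x})(S/z)$ by the induction hypothesis. For each $\Gamma\in Y$ we have $\Gamma\vdash S$, so $({\vdash}\nu)$ gives $\Gamma\in\mathsf{Pr}_{cf}(E)$; closedness of $\mathsf{Pr}_{cf}(E)$ then handles the $cl$.

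\textbf{Main obstacle.} I expect the hardest step to be (i) for $\nu$, namely showing $[E]$ lies in the closure of the union. This needs $cl(\{[E]\})\subseteq\llbracket A'\rrbracket^{W[z\mapsto cl(\{[E]\})]}$. The induction hypothesis with $(E,cl(\{[E]\}))$ gives $[A'(\vec{B}/\vec{x})(E/z)]$ in that set. Membership of $[E]$ then follows from $(\nu{\vdash})$: every $C$ with $A'(\ldots)(E/z)\vdash C$ also has $E\vdash C$. The subtle points are the interplay between closure and the intersection representation, and ensuring admissibility for Monotonicity.
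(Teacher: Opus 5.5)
Your proof is correct. Its skeleton is the paper's: a simultaneous induction on (i) and (ii). The $\mu$-(ii) and $\nu$-(i) cases use the same candidates as the paper. Your $\llbracket A'\rrbracket^{W[z\mapsto\mathsf{Pr}_{cf}(E)]}$ is the paper's $Y$, and your $cl(\{[E]\})$ is exactly the paper's $\bigcap\{\mathsf{Pr}_{cf}(C)\mid [E]\in\mathsf{Pr}_{cf}(C)\}$.

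The genuine difference is in $\mu$-(i) and $\nu$-(ii). The paper sets itself stronger intermediate goals: $[\mu y.A'(\vec{B}/\vec{x})]\in\llbracket A'\rrbracket^{V[\vec{x}\mapsto\vec{X}][y\mapsto Y]}$ and $\llbracket A'\rrbracket^{V[\vec{x}\mapsto\vec{X}][y\mapsto Y]}\subseteq\mathsf{Pr}_{cf}(\nu y.A'(\vec{B}/\vec{x}))$. The induction hypothesis for these sets only produces sequents mentioning the unfolded formula $A'(\vec{B}/\vec{x})(E/y)$. So the paper must use that formula as the invariant in $(\mu{\vdash})$ and $({\vdash}\nu)$, and needs the Functoriality Lemma to supply the invariance premise.

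You aim directly at $[E]\in Y$ and $Y\subseteq\mathsf{Pr}_{cf}(E)$, and use the witness $S$ of $Y$ itself as the invariant. For $\mu$, $[S]\in Y$ gives $S\vdash C$ for every $C$ in the intersection representation of $Y$. The prefixed-point property, $Y\subseteq\mathsf{Pr}_{cf}(S)$ and the induction hypothesis give $A'(\vec{B}/\vec{x})(S/z)\vdash S$. The $\nu$ case is dual, giving $S\vdash A'(\vec{B}/\vec{x})(S/z)$. This is valid and more economical: on your route the Functoriality Lemma is not needed anywhere in the cut-elimination argument. The paper's route only buys the stronger ``one-step unfolding'' memberships, which are never used afterwards.

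Two smaller points.

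\emph{Monotonicity.} Your worry about admissibility in $\mu$-(ii) is moot, because Monotonicity is not needed there. Your inclusion $X\subseteq\mathsf{Pr}_{cf}(E)$ says $\llbracket A'\rrbracket^{W[z\mapsto\mathsf{Pr}_{cf}(E)]}\subseteq\mathsf{Pr}_{cf}(E)$. So $\mathsf{Pr}_{cf}(E)\in\mathcal{D}$ is itself a prefixed point, and $\llbracket\mu z.A'\rrbracket^{W}\subseteq\mathsf{Pr}_{cf}(E)$ follows at once; the paper takes the same detour. Had Monotonicity been needed, your first proposed fix would be circular. Admissibility quantifies over all formulas, while mid-induction the $\mathcal{D}$-membership remark is available only for subformulas. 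The right observation is that the proof of Monotonicity never uses admissibility, only that the valuations involved are $\mathcal{D}$-valuations.

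\emph{The $\nu$-(i) candidate.} Drop the first candidate in that paragraph: it is just $\llbracket A'\rrbracket^{W[z\mapsto\mathsf{Pr}_{cf}(E)]}$, and nothing shows it contains $[E]$ or is a postfixed point. The argument with $Z=cl(\{[E]\})$ is the right one. The step from $[E]\in\llbracket A'\rrbracket^{W[z\mapsto Z]}$ to $Z\subseteq\llbracket A'\rrbracket^{W[z\mapsto Z]}$ is just closedness of the interpretation.
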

\begin{proof}
    By induction on the complexity of $A$, we prove the following: for any $n \in \mathbb{N}$, formulas $B_1, \ldots, B_n \in \mathcal{L}_\mu$, distinct variables $x_1, \ldots, x_n \in \mathcal{V}$, closed sets $X_1, \ldots, X_n \in D_M$, if $[B_i] \in X_i \subseteq \mathsf{Pr}_{cf}(B_i)$ for all $i$ with $1 \leq i \leq n$, then (i) $[A(\vec{B}/\vec{x})] \in \llbracket A \rrbracket^{V[\vec{x}\mapsto\vec{X}]}$ and (ii) $\llbracket A \rrbracket^{V[\vec{x}\mapsto\vec{X}]} \subseteq \mathsf{Pr}_{cf}(A(\vec{B}/\vec{x}))$. We only present the cases where $A$ is a variable, an atomic formula, or a fixpoint formula $A = \mu y.A'$ or $A = \nu y.A'$. The remaining cases follow analogously to the proof of Proposition~\ref{OkadaIMALL} for $\mathbf{IMALL}$ (see~\cite[Lemma 3.6]{Okada2002}). In what follows, ``provable'' refers to cut-free provability in $\mu\mathbf{IMALL}^-$. 
    For the base step, we proceed as follows. 
    \begin{itemize}
        \item Let $A = y \in \mathcal{V}$. 
        %Fix any $n \in \mathbb{N}$, $B_1, \ldots, B_n \in \mathcal{L}_\mu$, $x_1, \ldots, x_n \in \mathcal{V}$, $X_1, \ldots, X_n \in D_M$. 
        Assume that $[B_i] \in X_i \subseteq \mathsf{Pr}_{cf}(B_i)$ for all $1 \leq i \leq n$. If $y = x_i$ for some $i$ such that $1 \leq i \leq n$, then (i) $[y(\vec{B}/\vec{x})] = [B_i] \in X_i = V[\vec{x}\mapsto\vec{X}](x) = \llbracket y \rrbracket^{V[\vec{x}\mapsto\vec{X}]}$ and (ii) $\llbracket y \rrbracket^{V[\vec{x}\mapsto\vec{X}]} = X_i \subseteq \mathsf{Pr}_{cf}(B_i)$ = $\mathsf{Pr}_{cf}(y(\vec{B}/\vec{x}))$ by assumption. 
        Suppose otherwise. 
        For (i), since $y(\vec{B}/\vec{x}) = y$, $[y] \in \mathsf{Pr}_{cf}(y)  = V(y) = V[\vec{x}\mapsto\vec{X}](y) = \llbracket y \rrbracket^{V[\vec{x}\mapsto\vec{X}]}$. 
        For (ii),  we proceed as follows:
        $\llbracket y \rrbracket^{V[\vec{x}\mapsto\vec{X}]}$ = $\mathsf{Pr}_{cf}(y)$ = $\mathsf{Pr}_{cf}(y(\vec{B}/\vec{x}))$, as desired.
        \item Let $A = a \in \mathcal{A}$. This case can be shown similarly to the latter part of the previous one since $a(\vec{B}/\vec{x}) = a$.
    \end{itemize}
For the inductive step, our argument proceeds as follows.
    \begin{itemize}
        \item Let $A = \mu y.A'$.
        Assume that $[B_i] \in X_i \subseteq \mathsf{Pr}_{cf}(B_i)$ for all $1 \leq i \leq n$. 
        Moreover, we can assume without loss of generality that $y \not\in \{x_1, \ldots, x_n\}$ and that $y \not\in \mathsf{FV}(B_i)$ for all $1 \leq i \leq n$. Thus, $(\mu y.A')(\vec{B}/\vec{x})$
        is $\mu y. A'(\vec{B}/\vec{x})$. 
        In what follows, we show that (i) $[\mu y.A'(\vec{B}/\vec{x})] \in \llbracket \mu y.A' \rrbracket^{V[\vec{x}\mapsto\vec{X}]}$, and that (ii) $\llbracket \mu y.A' \rrbracket^{V[\vec{x}\mapsto\vec{X}]} \subseteq \mathsf{Pr}_{cf}(\mu y.A'(\vec{B}/\vec{x}))$.
        \begin{itemize}
            \item  (i) Since $\llbracket \mu y.A' \rrbracket^{V[\vec{x}\mapsto\vec{X}]} := \bigcap \{Y \in \mathcal{D} \mid \llbracket A' \rrbracket^{V[\vec{x}\mapsto\vec{X}][y \mapsto Y]} \subseteq Y\}$, let us fix any $Y \in \mathcal{D}$ such that $\llbracket A' \rrbracket^{V[\vec{x}\mapsto\vec{X}][y \mapsto Y]} \subseteq Y$. 
            Our goal is to show that $[\mu x.A'(\vec{B}/\vec{x})] \in Y$.
            Then it suffices to show that $[\mu y.A'(\vec{B}/\vec{x})] \in \llbracket A' \rrbracket^{V[\vec{x}\mapsto\vec{X}][y \mapsto Y]}$. 
            Since $\llbracket A' \rrbracket^{V[\vec{x}\mapsto\vec{X}][y \mapsto Y]} \in D_M$, it is of the form $\bigcap \{\mathsf{Pr}_{cf}(C) \mid C \in \mathbb{F}\}$ for some $\mathbb{F} \subseteq \mathcal{L}_\mu$. Fix any $C \in \mathbb{F}$. It suffices to show that $[\mu y.A'(\vec{B}/\vec{x})] \in \mathsf{Pr}_{cf}(C)$, i.e., $\mu y.A'(\vec{B}/\vec{x}) \vdash C$ is provable. 
            Since $Y \in \mathcal{D}$, there is some formula $E \in \mathcal{L}_\mu$ such that $[E] \in  Y\subseteq \mathsf{Pr}_{cf}(E)$. By induction hypothesis for (i), we get  $[A'(\vec{B}/\vec{x})(E/y)] \in \llbracket A' \rrbracket^{V[\vec{x}\mapsto\vec{X}][y \mapsto Y]}$. Since $[A'(\vec{B}/\vec{x})(E/y)] \in \bigcap \{\mathsf{Pr}_{cf}(C) \mid C \in \mathbb{F}\}$, a sequent $A'(\vec{B}/\vec{x})(E/y) \vdash C$ is provable. It follows that if $A'(\vec{B}/\vec{x})(E/y) \vdash E$ is provable, then $\mu y.A'(\vec{B}/\vec{x}) \vdash C$ is provable in terms of the functoriality rule (Lemma \ref{Functoriality}) as follows:
            \[
                \begin{bprooftree}
                    \AxiomC{$A'(\vec{B}/\vec{x})(E/y) \vdash E$}
                    \RightLabel{($func$)}
                    \UnaryInfC{$A'(\vec{B}/\vec{x})(A'(\vec{B}/\vec{x})(E/y)/y) \vdash A'(\vec{B}/\vec{x})(E/y)$}
                    \AxiomC{$A'(\vec{B}/\vec{x})(E/y) \vdash C$}
                    \RightLabel{($\mu{\vdash}$)}
                    \BinaryInfC{$\mu y.A'(\vec{B}/\vec{x}) \vdash C.$}
                \end{bprooftree}
            \] 
            We can apply the functoriality rule since we have assumed that the bound variable $y$ does not cause a variable clash in the substitution $(B/x)$, and so we can say that $y$ is still positive in $A'(B/x)$. Finally, we show that $A'(\vec{B}/\vec{x})(E/y) \vdash E$ is provable, i.e., $[A'(\vec{B}/\vec{x})(E/y)] \in \mathsf{Pr}_{cf}(E)$. Since $[A'(\vec{B}/\vec{x})(E/y)] \in \llbracket A' \rrbracket^{V[\vec{x}\mapsto\vec{X}][y \mapsto Y]}$, it suffices to show that $\llbracket A' \rrbracket^{V[\vec{x}\mapsto\vec{X}][y \mapsto Y]} \subseteq \mathsf{Pr}_{cf}(E)$, which holds by $\llbracket A' \rrbracket^{V[\vec{x}\mapsto\vec{X}][y \mapsto Y]} \subseteq Y \subseteq \mathsf{Pr}_{cf}(E)$.
            \item (ii) We show $\llbracket \mu y.A' \rrbracket^{V[\vec{x}\mapsto\vec{X}]} \subseteq \mathsf{Pr}_{cf}(\mu y.A'(\vec{B}/\vec{x}))$. By the definition of $\llbracket \mu x.A'\rrbracket^{V[\vec{x}\mapsto\vec{X}]}$, it suffices to show that $\llbracket A' \rrbracket^{V[\vec{x}\mapsto\vec{X}][y \mapsto Y]} \subseteq Y$ and $Y \subseteq \mathsf{Pr}_{cf}(\mu y.A'(\vec{B}/\vec{x}))$ for some $Y \in \mathcal{D}$. Put $Y = \llbracket A' \rrbracket^{V[\vec{x}\mapsto\vec{X}][y \mapsto \mathsf{Pr}_{cf}(\mu y.A'(\vec{B}/\vec{x}))]}$. For our goal, it suffices to prove that $Y \subseteq \mathsf{Pr}_{cf}(\mu y.A'(\vec{B}/\vec{x}))$. This is because $Y \subseteq \mathsf{Pr}_{cf}(\mu y.A'(\vec{B}/\vec{x}))$ implies that $\llbracket A' \rrbracket^{V[\vec{x}\mapsto\vec{X}][y \mapsto Y]} \subseteq Y$ by the definition of $Y$ and Lemma \ref{Monotonicity} (Monotonicity). So, let us establish $Y \subseteq \mathsf{Pr}_{cf}(\mu y.A'(\vec{B}/\vec{x}))$ below. By the (${\vdash}\mu$) rule, we obtain that $\mathsf{Pr}_{cf}(A'(\vec{B}/\vec{x})(\mu y.A'(\vec{B}/\vec{x})/y)) \subseteq \mathsf{Pr}_{cf}(\mu y.A'(\vec{B}/\vec{x}))$. Hence, it suffices to show $Y \subseteq \mathsf{Pr}_{cf}(A'(\vec{B}/\vec{x})(\mu y.A'(\vec{B}/\vec{x})/y))$. But, this is immediate from induction hypothesis for (ii) and $[\mu y.A'(\vec{B}/\vec{x})] \in \mathsf{Pr}_{cf}(\mu y.A'(\vec{B}/\vec{x}))$. 
        \end{itemize}     

        \item Let $A = \nu y.A'$. Assume that $[B_i] \in X_i \subseteq \mathsf{Pr}_{cf}(B_i)$ for all $1 \leq i \leq n$. Moreover, we can assume without loss of generality that $y \not\in \{x_1, \ldots, x_n\}$ and that $y \not\in \mathsf{FV}(B_i)$ for all $1 \leq i \leq n$. We show (i) $[\nu x.A'(\vec{B}/\vec{x})] \in \llbracket \nu y.A' \rrbracket^{V[\vec{x}\mapsto\vec{X}]}$, and (ii) $\llbracket \nu y.A' \rrbracket^{V[\vec{x}\mapsto\vec{X}]} \subseteq \mathsf{Pr}_{cf}(\nu y.A'(\vec{B}/\vec{x}))$. 
        \begin{itemize}
            \item  (i) Recall that $\llbracket \nu y.A' \rrbracket^{V[\vec{x}\mapsto\vec{X}]} \coloneqq cl(\bigcup \{Y \in \mathcal{D} \mid Y \subseteq \llbracket A' \rrbracket^{V[\vec{x}\mapsto\vec{X}][y \mapsto Y]}\})$. It suffices to show that $[\nu y.A'(\vec{B}/\vec{x})] \in \bigcup \{Y \in \mathcal{D} \mid Y \subseteq \llbracket A' \rrbracket^{V[\vec{x}\mapsto\vec{X}][y \mapsto Y]}\}$ by Proposition \ref{closureProposition}. For our goal, it suffices to prove that $[\nu y.A'(\vec{B}/\vec{x})] \in Y$ and $Y \subseteq \llbracket A' \rrbracket^{V[\vec{x}\mapsto\vec{X}][y \mapsto Y]}$ for some $Y \in \mathcal{D}$. Put $Y = \bigcap \{\mathsf{Pr}_{cf}(C) \mid \text{$C \in \mathcal{L}_\mu$ and $[\nu y.A'(\vec{B}/\vec{x})] \in \mathsf{Pr}_{cf}(C)$}\}$. It is clear that $Y \in D_M$. Moreover, we have $[\nu y.A'(\vec{B}/\vec{x})] \in Y$ and $Y \subseteq \mathsf{Pr}_{cf}(\nu y.A'(\vec{B}/\vec{x}))$, hence $Y \in \mathcal{D}$. So, it suffices to show that $Y \subseteq \llbracket A' \rrbracket^{V[\vec{x}\mapsto\vec{X}][y \mapsto Y]}$ in what follows. Since $\llbracket A' \rrbracket^{V[\vec{x}\mapsto\vec{X}][y \mapsto Y]} \in D_M$, it is of the form $\bigcap \{\mathsf{Pr}_{cf}(C) \mid C \in \mathbb{F}\}$ for some $\mathbb{F} \subseteq \mathcal{L}_\mu$. It suffices to show that $[\nu y.A'(\vec{B}/\vec{x})] \in \bigcap \{\mathsf{Pr}_{cf}(C) \mid C \in \mathbb{F}\}$ by $[\nu y.A'(\vec{B}/\vec{x})] \in Y$ and the definition of $Y$. 
            Fix any $C \in \mathbb{F}$. We show that $[\nu y.A'(\vec{B}/\vec{x})] \in \mathsf{Pr}_{cf}(C)$, i.e., $\nu y.A'(\vec{B}/\vec{x}) \vdash C$ is provable. By induction hypothesis for (i), we get 
            $$[A'(\vec{B}/\vec{x})(\nu y.A'(\vec{B}/\vec{x})/y)] \in \llbracket A' \rrbracket^{V[\vec{x}\mapsto\vec{X}][y \mapsto Y]} = \bigcap \{\mathsf{Pr}_{cf}(C) \mid C \in \mathbb{F}\}$$ 
            Therefore, $[A'(\vec{B}/\vec{x})(\nu y.A'(\vec{B}/\vec{x})/y)] \in \mathsf{Pr}_{cf}(C)$, i.e., $A'(\vec{B}/\vec{x})(\nu y.A'(\vec{B}/\vec{x})/y) \vdash C$ is provable. By applying the ($\nu{\vdash}$) rule to this sequent, we obtain a proof of $\nu y.A'(\vec{B}/\vec{x}) \vdash C$.

        \item  (ii) We show that $cl(\bigcup \{Y \in \mathcal{D} \mid Y \subseteq \llbracket A' \rrbracket^{V[\vec{x}\mapsto\vec{X}][y \mapsto Y]}\}) \subseteq \mathsf{Pr}_{cf}(\nu y.A'(\vec{B}/\vec{x}))$. 
        
        Since $\mathsf{Pr}_{cf}(\nu y.A'(\vec{B}/\vec{x})) \in D_M$, it suffices to show that $\bigcup \{Y \in \mathcal{D} \mid Y \subseteq \llbracket A' \rrbracket^{V[\vec{x}\mapsto\vec{X}][y \mapsto Y]}\} \subseteq \mathsf{Pr}_{cf}(\nu y.A'(\vec{B}/\vec{x}))$ by Proposition \ref{closureProposition}. Fix any $Y \in \mathcal{D}$ such that $Y \subseteq \llbracket A' \rrbracket^{V[\vec{x}\mapsto\vec{X}][y \mapsto Y]}$. Our goal is to show that $Y \subseteq \mathsf{Pr}_{cf}(\nu y.A'(\vec{B}/\vec{x}))$. Then it suffices to show that $\llbracket A' \rrbracket^{V[\vec{x}\mapsto\vec{X}][y \mapsto Y]} \subseteq \mathsf{Pr}_{cf}(\nu y.A'(\vec{B}/\vec{x}))$. Since $Y \in \mathcal{D}$, there is some $E \in \mathcal{L}_\mu$ such that $[E] \in Y \subseteq \mathsf{Pr}_{cf}(E)$. By induction hypothesis for (ii), we get $\llbracket A' \rrbracket^{V[\vec{x}\mapsto\vec{X}][y \mapsto Y]} \subseteq \mathsf{Pr}_{cf}(A'(\vec{B}/\vec{x})(E/y))$. Hence, it suffices to show $\mathsf{Pr}_{cf}(A'(\vec{B}/\vec{x})(E/y)) \subseteq \mathsf{Pr}_{cf}(\nu y.A'(\vec{B}/\vec{x}))$. Fix any $\Gamma \in \mathsf{Pr}_{cf}(A'(\vec{B}/\vec{x})(E/y))$. We show that $\Gamma \in \mathsf{Pr}_{cf}(\nu y.A'(\vec{B}/\vec{x}))$, i.e., $\Gamma \vdash \nu y.A'(\vec{B}/\vec{x})$ is provable. It follows that 
            if $E \vdash A'(\vec{B}/\vec{x})(E/y)$ is provable then $\Gamma \vdash \nu y.A'(\vec{B}/\vec{x})$ is provable in terms of the functoriality rule (Lemma \ref{Functoriality}) as follows:
        \[
            \begin{bprooftree}
                \AxiomC{$E \vdash A'(\vec{B}/\vec{x})(E/y)$}
                \RightLabel{($func$)}
                \UnaryInfC{$A'(\vec{B}/\vec{x})(E/y) \vdash A'(\vec{B}/\vec{x})(A'(\vec{B}/\vec{x})(E/y)/y)$}
                \AxiomC{$\Gamma \vdash A'(\vec{B}/\vec{x})(E/y)$}
                \RightLabel{(${\vdash}\nu$)}
                \BinaryInfC{$\Gamma \vdash \nu y.A'(\vec{B}/\vec{x})$.}
            \end{bprooftree}
        \]
        Finally, we show that $E \vdash A'(\vec{B}/\vec{x})(E/y)$ is provable, i.e., $[E] \in \mathsf{Pr}_{cf}(A'(\vec{B}/\vec{x})(E/y))$. This holds by $[E] \in Y \subseteq \llbracket A' \rrbracket^{V[\vec{x}\mapsto\vec{X}][y \mapsto Y]} \subseteq \mathsf{Pr}_{cf}(A'(\vec{B}/\vec{x})(E/y))$.
        This finishes establishing 
        $\Gamma \in \mathsf{Pr}_{cf}(\nu y.A'(\vec{B}/\vec{x}))$. 
        \qedhere
        \end{itemize}
    \end{itemize}
\end{proof}
\begin{remark}
    Our proof strategy differs from that of De et al.~\cite[Lemmas 34 and 35]{DeJafarrahmaniSaurin2022}. 
    In their approach, they first established (ii) for all formulas and subsequently derived (i) as a consequence of (ii). In the classical setting, Negation Normal Form (NNF) allows a simpler induction. In contrast, in the intuitionistic case, NNF is unavailable, requiring a two-sided calculus to deal with the rules of linear implication $\multimap$. Then, our proof requires proving (i) and (ii) simultaneously for each case, because the inductive step for linear implication relies on the inductive hypotheses of both (i) and (ii) (recall the proof of Proposition \ref{OkadaIMALL} (see also~\cite[Lemma 3.6]{Okada2002})). This simultaneous induction carries over to the fixpoint cases.
\end{remark}

\begin{lemma}\label{lem:admissibilityOfSyntacticModel}
For all formulas $A \in \mathcal{L}_\mu$, it holds that $[A] \in \llbracket A \rrbracket^V \subseteq \mathsf{Pr}_{cf}(A)$.
Therefore, the syntactic intuitionistic $\mu$-phase model of Definition \ref{dfn:syntacticMuPhaseModel} is admissible. 
\end{lemma}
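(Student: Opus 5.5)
The plan is to obtain the statement as the special case $n = 0$ of Lemma \ref{lem:syntacticMuPhase}. With the empty tuples $\vec{B}$, $\vec{x}$, $\vec{X}$, the side condition $[B_i] \in X_i \subseteq \mathsf{Pr}_{cf}(B_i)$ holds vacuously. The substitution $A(\vec{B}/\vec{x})$ is just $A$, which can be checked by a trivial induction on the definition of simultaneous substitution, since no variable gets replaced. Also $V[\vec{x}\mapsto\vec{X}] = V$. Lemma \ref{lem:syntacticMuPhase} then yields $[A] \in \llbracket A \rrbracket^V \subseteq \mathsf{Pr}_{cf}(A)$ for every $A \in \mathcal{L}_\mu$.

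For admissibility, we must show that $\llbracket A \rrbracket^V \in \mathcal{D}$ for all $A$. First, $\llbracket A \rrbracket^V \in D_M$ holds for every formula. This follows by the routine induction already noted in Section \ref{sec:PhaseSemanticsForMuIMALL}: intersections and $X \multimap Y$ with $Y$ closed stay in $D_M$, closures are in $D_M$, and the $\mu$-clause is an intersection of closed sets (empty intersection gives $M \in D_M$). Second, the inclusions $[A] \in \llbracket A \rrbracket^V \subseteq \mathsf{Pr}_{cf}(A)$ just obtained exhibit $A$ itself as the witnessing formula required in the definition of $\mathcal{D}$. Hence $\llbracket A \rrbracket^V \in \mathcal{D}$, and $V$ is admissible. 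Lemma \ref{lem:welldefinedSyntacticmu} already guarantees that $V$ is a $\mathcal{D}$-valuation, so the syntactic $\mu$-phase model is admissible.

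The only delicate point concerns the hypotheses of Lemma \ref{lem:syntacticMuPhase}. Its proof quantifies over arbitrary closed sets $X_i$, and in the fixpoint cases it extends the valuation by sets $Y \in \mathcal{D}$; it never presupposes admissibility of $V$. The one step to check is the use of Monotonicity (Lemma \ref{Monotonicity}) in the $\mu$-case (ii), since that lemma is stated for admissible models. If necessary, I would observe that its inductive proof only uses that the valuations involved map into $\mathcal{D}$ (or even $D_M$), not full admissibility, so there is no circularity in deriving admissibility from Lemma \ref{lem:syntacticMuPhase}.
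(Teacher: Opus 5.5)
Your proposal is correct and follows the paper's route: the paper also derives the statement by specializing Lemma~\ref{lem:syntacticMuPhase}, though it instantiates with $B_i = x_i$ and $X_i = V(x_i) = \mathsf{Pr}_{cf}(x_i)$ instead of the empty tuple; this makes no difference, since both choices give $A(\vec{B}/\vec{x}) = A$ and $V[\vec{x}\mapsto\vec{X}] = V$. You also fill in points the paper leaves implicit: you spell out admissibility (membership in $D_M$, with $A$ itself as the witness for $\mathcal{D}$), and you observe that Lemma~\ref{Monotonicity} holds for arbitrary $\mathcal{D}$-valuations, which removes any apparent circularity in its use inside Lemma~\ref{lem:syntacticMuPhase}.
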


\begin{proof}
Fix any formula $A \in \mathcal{L}_\mu$. 
By Lemma \ref{lem:syntacticMuPhase}, we get $[A(\vec{x}/\vec{x})] \in \llbracket A \rrbracket^{V[\vec{x}\mapsto \vec{V(x)}]} \subseteq \mathsf{Pr}_{cf}(A(\vec{x}/\vec{x}))$ hence 
$[A] \in \llbracket A \rrbracket^V \subseteq \mathsf{Pr}_{cf}(A)$. 
\end{proof}

Moreover, we can prove the cut-free completeness and the cut-elimination theorem for $\mu\mathbf{IMALL}$ similarly to $\mathbf{IMALL}$.

\begin{lemma}[Cut-free Completeness for $\mu\mathbf{IMALL}$]\label{lem:CutFreeCompletenessForMuIMALL}
    If a formula $A$ is true in any admissible intuitionistic $\mu$-phase model, then $\vdash A$ is provable in $\mu\mathbf{IMALL}^-$.
\end{lemma}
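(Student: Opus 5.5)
The plan mirrors the proof of Proposition~\ref{CutFreeCompletenessForIMALL}, with the syntactic intuitionistic $\mu$-phase model of Definition~\ref{dfn:syntacticMuPhaseModel} playing the role of the syntactic phase model. Assume that $A$ is true in every admissible intuitionistic $\mu$-phase model. The first step is to check that the syntactic model $(M, D_M, \mathcal{D}, V)$ is a legitimate instance of the hypothesis. Lemma~\ref{lem:welldefinedSyntacticmu} gives that it is an intuitionistic $\mu$-phase model; in particular, $(M, D_M)$ is an intuitionistic phase space (closure of $D_M$ under $\multimap$ uses Lemma~\ref{inversion}), and $V$ is a $\mathcal{D}$-valuation. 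Lemma~\ref{lem:admissibilityOfSyntacticModel} gives that it is admissible.

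Applying the assumption to this model, $A$ is true there, i.e.\ the neutral element lies in the interpretation: $\emptyset \in \llbracket A \rrbracket^V$. Next invoke the right-hand inclusion of Lemma~\ref{lem:admissibilityOfSyntacticModel}, namely $\llbracket A \rrbracket^V \subseteq \mathsf{Pr}_{cf}(A)$. This yields $\emptyset \in \mathsf{Pr}_{cf}(A)$, which by definition of $\mathsf{Pr}_{cf}$ means that $\emptyset \vdash A$, i.e.\ $\vdash A$, is provable without $(Cut)$ in $\mu\mathbf{IMALL}$, that is, in $\mu\mathbf{IMALL}^-$.

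There is no real obstacle left at this stage. All substantive work, namely the simultaneous induction establishing $[A] \in \llbracket A\rrbracket^V \subseteq \mathsf{Pr}_{cf}(A)$ including the fixpoint cases via Lemma~\ref{Functoriality}, has already been done in Lemmas~\ref{lem:syntacticMuPhase} and~\ref{lem:admissibilityOfSyntacticModel}. The only point requiring care is that the hypothesis quantifies over \emph{admissible} models only. The argument therefore genuinely depends on the admissibility part of Lemma~\ref{lem:admissibilityOfSyntacticModel}, and not merely on the inclusion $\llbracket A \rrbracket^V \subseteq \mathsf{Pr}_{cf}(A)$, so I will cite both conclusions of that lemma explicitly.
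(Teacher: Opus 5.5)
Your proposal is correct and follows the paper's proof essentially verbatim. You instantiate the hypothesis at the syntactic $\mu$-phase model, which is admissible by Lemma~\ref{lem:admissibilityOfSyntacticModel}, obtain $\emptyset\in\llbracket A\rrbracket^V$, and conclude via $\llbracket A\rrbracket^V\subseteq\mathsf{Pr}_{cf}(A)$; the only difference is that you cite Lemma~\ref{lem:admissibilityOfSyntacticModel} for this inclusion, whereas the paper cites Lemma~\ref{lem:syntacticMuPhase}, of which it is the instance with no substituted variables.
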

\begin{proof}
    Assume that $A$ is true in any admissible intuitionistic $\mu$-phase model. Then, by Lemma \ref{lem:admissibilityOfSyntacticModel}, $A$ is true in the syntactic intuitionistic $\mu$-phase model $(M, D_M, V)$, meaning $1 = \emptyset \in \llbracket A \rrbracket^V$. By Lemma \ref{lem:syntacticMuPhase}, we get  $\llbracket A \rrbracket^V \subseteq \mathsf{Pr}_{cf}(A)$. Therefore, we conclude that $\emptyset \in \mathsf{Pr}_{cf}(A)$, which implies that $\vdash A$ is provable in $\mu\mathbf{IMALL}^-$.
\end{proof}
\begin{theorem}[Cut-elimination for $\mu\mathbf{IMALL}$]
    If a sequent $\Gamma \vdash C$ is provable in $\mu\mathbf{IMALL}$, then it is provable in $\mu\mathbf{IMALL}^-$. 
\end{theorem}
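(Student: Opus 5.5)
The plan mirrors the corollary proved for $\mathbf{IMALL}$ in Section~\ref{sec:Preliminaries}: first reduce an arbitrary sequent to a sequent with empty antecedent, then combine soundness with cut-free completeness, and finally recover the original sequent by inversion.

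Assume that $\Gamma \vdash C$ is provable in $\mu\mathbf{IMALL}$, where $\Gamma = A_1, \ldots, A_n$. If $n \geq 1$, we apply $(\otimes{\vdash})$ $n-1$ times and then $(\vdash\multimap)$ once, obtaining a proof of $\vdash \bigotimes \Gamma \multimap C$ in $\mu\mathbf{IMALL}$. If $n = 0$, we take $\bigotimes\emptyset = \mathbf{1}$ and use $(\mathbf{1}{\vdash})$ followed by $(\vdash\multimap)$ to get $\vdash \mathbf{1} \multimap C$. These steps introduce no cut beyond those already present, so the derivation stays inside $\mu\mathbf{IMALL}$.

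By Lemma~\ref{SoundnessАorMuIMALL} (Soundness), the formula $\bigotimes\Gamma \multimap C$ is true in every admissible intuitionistic $\mu$-phase model. Lemma~\ref{lem:CutFreeCompletenessForMuIMALL} then yields that $\vdash \bigotimes\Gamma \multimap C$ is provable in $\mu\mathbf{IMALL}^-$; this step implicitly relies on the admissibility of the syntactic model established in Lemma~\ref{lem:admissibilityOfSyntacticModel}.

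Next, Lemma~\ref{inversion}(1) gives a cut-free proof of $\bigotimes\Gamma \vdash C$, and repeated use of Lemma~\ref{inversion}(2) unfolds the tensors one at a time, giving $A_1, \ldots, A_n \vdash C$ in $\mu\mathbf{IMALL}^-$. In the case $n=0$ we instead need a cut-free proof of $\vdash C$ from one of $\mathbf{1} \vdash C$. This requires an additional inversion for $\mathbf{1}$ on the left, namely: if $\mathbf{1}, \Gamma \vdash C$ is cut-free provable, then so is $\Gamma \vdash C$. It is proved by the same routine induction on cut-free proofs as Lemma~\ref{inversion}. Alternatively, the case $n=0$ can be avoided altogether, since for $\Gamma = \emptyset$ the completeness lemma applied directly to $C$ already gives $\vdash C$.

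The only point needing care is the inversion step in the presence of the fixpoint rules, in particular $(\mu{\vdash})$ and $({\vdash}\nu)$, whose premises $A(S/x) \vdash S$ and $S \vdash A(S/x)$ do not contain the context. Inversion, however, is already supplied by Lemma~\ref{inversion}, and in those rules the principal formula $\bigotimes\Gamma$ (or $\mathbf{1}$) can only sit in the context $\Gamma$ of the right premise. The induction therefore passes through unchanged, and I expect no genuine obstacle: the substantive work is contained in the earlier soundness and completeness lemmas.
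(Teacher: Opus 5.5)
Your proposal is correct and follows the paper's proof essentially verbatim. You fold $\Gamma$ into $\vdash \bigotimes\Gamma \multimap C$ using $(\otimes{\vdash})$ and $({\vdash}\multimap)$, apply Soundness and then Cut-free Completeness for $\mu\mathbf{IMALL}$, and unfold again with the Inversion Lemma in $\mu\mathbf{IMALL}^-$. Your explicit treatment of the unit is a little more careful than the paper, which passes silently over the $\mathbf{1}$ introduced by its convention $\bigotimes\emptyset=\mathbf{1}$: you use an $(n-1)$-fold tensor for nonempty $\Gamma$, and for $\Gamma=\emptyset$ either a $\mathbf{1}$-inversion or a direct appeal to completeness.
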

\begin{proof}
    Assume that $\Gamma \vdash C$ is provable in $\mu\mathbf{IMALL}$. By applying the ($\otimes{\vdash}$) rules several times and the ($\vdash\multimap$) rule once, the sequent $\vdash \bigotimes \Gamma \multimap C$ is provable in $\mu\mathbf{IMALL}$. By Lemma \ref{SoundnessForIMALL} (Soundness for $\mu\mathbf{IMALL}$), the formula $\bigotimes \Gamma \multimap C$ is true in any admissible intuitionistic $\mu$-phase model. By Lemma \ref{lem:CutFreeCompletenessForMuIMALL} (Cut-free Completeness for $\mu\mathbf{IMALL}$), the sequent $\vdash \bigotimes \Gamma \multimap C$ is provable in $\mu\mathbf{IMALL}^-$. Then, by Lemma \ref{inversion} (Inversion in $\mu\mathbf{IMALL}^-$), $\Gamma \vdash C$ is provable in $\mu\mathbf{IMALL}^-$.
\end{proof}
\section{Conclusion and Future Directions}\label{sec:ConclusionAndFutureWorks}
In this paper, we have defined phase semantics for intuitionistic propositional multiplicative-additive linear logic with least and greatest fixpoints, $\mu \mathbf{IMALL}$, and established the cut-elimination theorem for the system via these semantics. 

There are six potential directions for future research. Firstly, we propose to extend our phase semantics and cut-elimination proof to a first-order predicate system. Since linear logic with least and greatest fixpoints was originally introduced as a first-order system within the context of linear logic programming, this extension is expected to be highly fruitful.

Secondly, we aim to prove the cut-elimination theorem using the two alternative methods outlined in the introduction. Specifically, it may be possible to establish cut-elimination by defining reduction rules, potentially by translating formulas into second-order linear logic.

Thirdly, it is necessary to investigate the relationship between the exponential modality $\oc$ and the fixpoint operators within an intuitionistic setting. Baelde~\cite[Section 2.3]{Baelde2012} demonstrated that the exponential modality $\oc$ can be simulated in classical linear logic via the translation $t(\oc A) \coloneqq \nu x. (\mathbf{1} \with A \with (x \otimes x))$. However, Baelde noted the difficulties inherent in the converse direction of this simulation, and Das~\cite{Das2024} subsequently established that the translation is not faithful. We aim to investigate whether this lack of faithfulness also holds in intuitionistic linear logic. As suggested by Das~\cite{Das2024}, constructing counterexamples using phase semantics is a promising approach for proving non-faithfulness, which aligns closely with the research presented in this paper.

Fourthly, conservativity results are of interest. Schellinx~\cite[Proposition 3.8]{Schellinx1991} showed syntactically that $\mathbf{CLL}$ is a conservative extension of $\mathbf{ILL}$ without $\mathbf{0}$. However, this relationship does not hold in the presence of fixpoint operators, as $\mathbf{0}$ is equivalent to $\mu x.x$. It remains to be investigated which syntactic fragments containing fixpoint operators preserve this conservativity. Furthermore, Schellinx's result can be viewed as an embedding of $\mathbf{ILL}$ into $\mathbf{CLL}$ without $\mathbf{0}$. In contrast, Kanovich, Okada, and Terui~\cite{KanovichOkadaTerui2006} established an embedding of $\mathbf{ILL}$ into $\mathbf{CLL}$ by studying the relationship between classical and intuitionistic phase spaces. It would also be interesting to investigate whether their semantic results could be adapted to $\mu\mathbf{MALL}$ and $\mu\mathbf{IMALL}$ to clarify the relationship between these systems.

Fifthly, we could apply our phase semantics and cut-elimination proof to a system incorporating $\omega$-rules. De et al.~\cite[Section 4]{DeJafarrahmaniSaurin2022} introduced the system $\mu \mathbf{MALL}_\omega$, in which the greatest fixpoint $\nu$ is handled by an $\omega$-rule with infinitely many premises. This system incorporates induction and coinduction implicitly, and as a result, it enjoys a form of the subformula property. In contrast, the system presented in this paper lacks this property due to the invariant $S$ in the $(\mu{\vdash})$ and $({\vdash}\nu)$ rules. De et al. also defined phase semantics for $\mu\mathbf{MALL}_\omega$ and established the cut-elimination theorem using these semantics. Furthermore, it may be possible to define an intuitionistic fragment of $\mu\mathbf{MALL}_\omega$, denoted as $\mu\mathbf{IMALL}_\omega$, and develop its phase semantics to prove cut-elimination. Given that Terui \cite{Terui2018} established cut-elimination for second-order intuitionistic logic with similar $\Omega$-rules using algebraic semantics, this approach appears feasible.

Finally, this work paves the way for the development of higher-order model checking using $\mu \mathbf{IMALL}$ as a specification logic. Our long-term objective is to derive decidability arguments formulated directly in terms of inductive and coinductive principles, with the ultimate goal of certifying these proofs within proof assistants. This direction contributes to the construction of trustworthy formal method tools for functional programming, facilitating the transfer of highly theoretical results into practical and reliable applications.

\section*{Acknowledgements}
The work of the first author was supported by JST SPRING, Grant Number JPMJSP2119. The work of the third author was partially supported by JSPS KAKENHI Grants-in-Aid for Scientific Research (B) (Grant Number JP22H00597) and (C) (Grant Number JP25K03537).
\nocite{*}
\bibliographystyle{eptcs}
\bibliography{generic}
\end{document}